\def\eqref#1{equation~\ref{#1}}
\def\1{\bm{1}}
\DeclareMathAlphabet{\mathsfit}{\encodingdefault}{\sfdefault}{m}{sl}
\SetMathAlphabet{\mathsfit}{bold}{\encodingdefault}{\sfdefault}{bx}{n}
\theoremstyle{plain}
\newtheorem{theorem}{Theorem}[section]
\theoremstyle{definition}
\theoremstyle{remark}
\newcommand{\rel}{\textit{rel}\xspace}
\newcommand{\CG}{\textit{CG}\xspace}
\newcommand{\DCG}{\textit{DCG}\xspace}
\newcommand{\NDCG}{\textit{NDCG}\xspace}
\newcommand{\IDCG}{\textit{IDCG}\xspace}
\newcommand{\GREM}{\textit{GREM}\xspace}
\newcommand{\MAP}{\textit{MAP}\xspace}
\title{RankSHAP: Shapley Value Based Feature Attributions for Learning to Rank}
 \author{Tanya Chowdhury, Yair Zick, James Allan \\
 Manning College of Information and Computer Sciences\\
 University of Massachusetts Amherst\\
 \texttt{\{tchowdhury,yzick,allan\}@cs.umass.edu} }
\begin{document}

\maketitle

\begin{abstract}
   Numerous works propose post-hoc, model-agnostic explanations for learning to rank, focusing on ordering entities by their relevance to a query through feature attribution methods. However, these attributions often weakly correlate or contradict each other, confusing end users.  We adopt an axiomatic game-theoretic approach, popular in the feature attribution community, to identify a set of fundamental axioms that every ranking-based feature attribution method should satisfy. We then introduce Rank-SHAP, extending classical Shapley values to ranking. We evaluate the RankSHAP framework through extensive experiments on two datasets, multiple ranking methods and evaluation metrics. Additionally, a user study confirms RankSHAP’s alignment with human intuition. We also perform an axiomatic analysis of existing rank attribution algorithms to determine their compliance with our proposed axioms. Ultimately, our aim is to equip practitioners with a set of axiomatically backed feature attribution methods for studying IR ranking models, that ensure generality as well as consistency.

\end{abstract}

\section{Introduction}
A user submits a query to a search engine, wanting to know, say, what is the best car to purchase. The search engine processes the query and outputs a ranked list of documents (web pages) in response to the query. How did the search engine decide on the order in which the documents are presented? The task of ranking entails a query $\vec q$ and a set of documents $D$ as input to a ranking model $f_R$, which in turn generates an ordering of these documents. 
We assume that the ranking model is available only as a black-box which can be queried using an API. 
For a particular query, we wish to understand the logic behind the generated ordering, to verify the functionality of the model as well as to help end users. 
Ranking feature attributions are critical for building trust in systems which are used for high stakes decision making \citep{anand2022explainable, chowdhury2023rank}. 
Such feature attributions are also useful in ruling out prejudices in widely used web search, product search and recommendation systems \citep{singh2019exs,singh2020valid}. 

%Current state of feature attribution methods
Additive feature attributions have become a well studied topic in recent years, especially for regression and classification tasks. 
To maintain a consistency between different feature attribution methods, various authors \citep{lundberg2017unified,datta2016algorithmic} proposed a unique solution based on the Shapley value \citep{shapley1953value}. This solution is considered the reference standard of feature attributions because of the several fundamental axioms it uniquely satisfies. Different approximations such as KernelSHAP \citep{lundberg2017unified} and DeepSHAP \citep{lundberg2017unified,shrikumar2017learning}  are used for popular real-world applications. 

%Feature attribution methods for ranking (and why we need a gold standard)
Quite a few attempts have been made to generate feature attributions for ranking tasks, often by extending attribution methods from classification/regression. EXS \citep{singh2019exs} and Rank-LIME \citep{chowdhury2023rank} extend LIME in different manners to obtain ranking feature attributions. However, these attributions contradict each other across the different settings in the very same algorithm.  \citet{fernando2019study} also generate explanations for ranking models using DeepSHAP and report little correlation between the attributions generated by different \textit{reference values} for the very same query. This shows us that extensions of popular feature attribution methods for classification and regression are not good attribution candidates for ranking. More recently, RankingSHAP \citep{heuss2024rankingshap} extend Shapley values to generate feature attributions for ordered lists. However, we show that even their proposed-framework does not satisfy fundamental properties like \textit{monotonicity} and \textit{efficiency}. Consequently, establishing a robust reference standard for ranking feature attributions is imperative. Such a framework will significantly benefit a wide range of stakeholders, from end users seeking more reliable search results to researchers and practitioners aiming to gain deeper insights into model behavior.

%Contributions
\paragraph{Our Contributions.} 
We present a set of fundamental axioms for Information Retrieval (IR) value functions—\emph{Relevance Sensitivity} and \emph{Position Sensitivity}—along with Shapley properties specific to ranking: \emph{Rank-Efficiency}, \emph{Rank-Missingness}, \emph{Rank-Symmetry}, and \emph{Rank-Monotonicity}. We believe, these axioms are essential for any ranking feature attribution method to uphold without sacrificing generality. To address this need, we introduce \emph{RankSHAP}, an extension of the Shapley value that satisfies all the aforementioned axioms. We discuss computationally feasible approximations using KernelSHAP to make RankSHAP practical for real-world applications. We conduct extensive experiments on two datasets—MS MARCO and Robust04—using multiple document re-ranking models based on BERT, T5, and LLAMA2. Our results demonstrate that RankSHAP outperforms existing ranking feature attribution methods, surpassing the best baseline by $25.78\%$ in \emph{Fidelity} and $19.68\%$ in \emph{weighted Fidelity} (wFidelity). Additionally, we carry out an IRB-approved user study to assess how well the proposed axioms align with human intuition. We also provide an axiomatic analysis of existing rank attribution methods—EXS, RankLIME, and RankingSHAP—to determine whether they satisfy the fundamental axioms. Ultimately, we advocate for practitioners to adopt RankSHAP-based, axiomatically grounded feature attribution methods as the reference standard for their IR explanation needs. Once the work is accepted, we plan to release our code as a python library.

\section{Background} \label{sec:Background}
We start by briefly describing the Shapley value, as used for credit assignment in coalitional game theory, and its consequent use in feature attributions for machine learning models. Following this, we outline some basic properties of information retrieval systems and discuss evaluation metrics for ordered document lists.

\subsection{Basic Axioms and the Shapley Value}
The Shapley axioms for feature attributions have been borrowed from coalitional game theory, where they are foundational constraints, used to fairly distribute costs/rewards among a set of players.  In the context of feature attributions for ML models, we can think of each feature as a \emph{player} contributing to the model's output. Thus, given any model \( f \), that has a scalar output, the feature attribution of feature \( i \) in the decision made by \( f \) towards the model input \( \vec{s} \) is represented by \( \phi_i(f, \vec{s}) \). Let us assume there are a total of \( m \) features. For a feature attribution \( \phi_i(f, \vec{s}) \) to be \textit{sound}, it is necessary for it to satisfy certain basic axioms, namely: Efficiency, Missingness, Symmetry and Monotonicity (Appendix~\ref{sec : ShapleyAxioms}).

The \emph{Shapley value}~\citep{shapleyyoung, shapley1953value} uniquely satisfies the four axioms above and has been widely adopted in feature attribution for regression/classification tasks. It is defined as:
\begin{align}
\phi_i(f, \vec{s}) &= \sum_{\vec{z} \subseteq \vec{s}} \frac{|\vec{z}|!(m - |\vec{z}| - 1)!}{m!} \left[ f(\vec{z}) - f(\vec{z} \setminus i) \right]
\label{eq:shapley}
\end{align}

where \( \vec{z} \subseteq \vec{s} \) represents all binary vectors \( \vec{z} \) where the non-zero entries correspond to a subset of the non-zero entries in \( \vec{s} \). Essentially, \( \vec{z} \) indicates which features are included (non-zero) and which are excluded (zero) in a particular subset; \( |\vec{z}| \) is the number of non-zero entries in \( \vec{z} \); \( \vec{z} \setminus i \) denotes the vector \( \vec{z} \) with the \( i \)-th feature removed (set to zero/replaced by baseline value); \( f(\vec{z}) \) denotes the output of the model \( f \) when the input features corresponding to the non-zero entries in \( \vec{z} \) are set to their values in \( \vec{s} \), and the remaining features are set to some baseline value (e.g., zero).

Satisfying these basic axioms ensures that a feature attribution method is fair and consistent, and as a result, trustworthy for end users. The classical Shapley framework cannot be directly applied to the ranking feature attribution task as in the ranking scenario, the model output \( f_R(\vec{s}) \) is not scalar, but an ordering of documents.

\subsection{Document Relevance and IR Evaluation Metrics}

In Information Retrieval (IR), the concept of \emph{relevance} is fundamental to understanding how search systems operate. Relevance \( rel(\vec{q}, \vec{d}_i) \) quantifies how well a document \( \vec{d}_i \) matches a user's query \( \vec{q} \). For simplicity, the relevance score of a particular document \( \vec{d}_j \) can be denoted as \( rel_j \). These scores can be binary (relevant or not relevant) or numerical values indicating varying degrees of relevance, with higher scores representing better matches. Assigning relevance scores allows IR systems to prioritize the most pertinent documents in search results, enhancing the user's ability to find useful information quickly.

Evaluating the quality of an ordered list of documents (e.g., web pages) is crucial for assessing how effectively an IR system meets users' needs. Given an ordered list, we need to determine how well this arrangement serves the user compared to other possible orderings. Several evaluation metrics are commonly used for this purpose. \textit{Precision} measures the proportion of retrieved documents that are relevant. 
\textit{Mean Average Precision (MAP)} calculates the average precision across multiple queries, considering the rank positions of all relevant documents. \textit{Discounted Cumulative Gain (DCG)} takes into account both the relevance of documents and their positions in the ranked list. It applies a logarithmic discount factor to penalize relevant documents that appear lower in the list. \textit{Normalized Discounted Cumulative Gain (NDCG)}~\citep{jarvelin2002cumulated}  is a normalized version of DCG that scales the scores between 0 and 1. This allows for fair comparison across different queries and systems. 

Among these metrics, \NDCG has become the \emph{standard} and is \emph{almost universally used} to evaluate the effectiveness of ranked lists~\citep{yining2013theoretical}. Due to its ability to effectively account for both the relevance and the position of documents,  \NDCG is extensively adopted in modern IR systems, search engines, and recommendation platforms to assess and compare the performance of different ranking algorithms (Appendix~\ref{sec:appNDCG}).

\section{The RankSHAP Framework} \label{sec:RankSHAP}
We first define the feature attribution problem for ranking models in Section \ref{sec:problem_definition}. We then define the Shapley axioms for the ranking task in Section \ref{sec:shapley_props}.  Next, in Section \ref{sec:ranking_props}, we discuss ideal properties of ordered list value functions and propose a Generalized Ranking Effectiveness Metric ($\GREM$). Finally, in Section \ref{sec:shapley_theorem} we present the unique solution that satisfies the Shapley and $\GREM$ constraints for the ranking feature attribution task. In Section \ref{sec:approximating_RankSHAP}, we discuss computationally feasible approximations of the RankSHAP expression, making it suitable for practical use cases.

\subsection{The Ranking attribution Problem}\label{sec:problem_definition}
Let $f_R$ represent a ranking model, acessible to us only as a black-box. Given a query $\vec q$ and a set of documents $D = \{\vec d_1, \dots, \vec d_k\}$, $f_R$ computes an ordering of documents in $D$, i.e., $f_R(\vec q,D)$. Let the query and set of documents $(\vec q,D)$ jointly form the instance ($\vec x$). The goal is to generate post-hoc feature attributions $\phi_R(f_R,\vec x)$, to understand the ranking model's decision corresponding to $\vec x$. 

We assume that there are $n$ documents in $D$, and that there are $m$ features of interest. $\phi_R(f_R,\vec x)$ is thus a $m$- dimensional vector where each element represents the contribution of a particular feature to the ranking decision. The query and documents here can be represented in any manner, for example: bag of words, human engineered features, vectors in some embedding space, etc.

\subsection{Shapley Axioms for Ranking}\label{sec:shapley_props}
Let the set of features being studied be represented by $M = \{1,2,3 ..., m\}$ and let $V_R$ be a function that assigns a real-valued effectiveness score to each ordering of documents. The effectiveness score reflects the quality of the ranking based on the relevance of the documents to the query.  We refer to the mean value of a feature across the dataset being studied as its baseline value. Using these, we re-write the Shapley axioms for the ranking task as follows: 

\begin{description}[leftmargin=0.2cm]
\item[Rank-Efficiency:]  The sum total of all feature attributions for a particular ordering should be equal to the difference between the effectiveness scores of the orderings obtained when all features are included and when all features are replaced by their baseline values.
\begin{align*}
    \sum_{i=1}^m \phi_R(f_R,\vec x, i) = V_R(f_R(\vec x)) - V_R(f_R(\emptyset))
\end{align*}
where $f_R(\emptyset)$ represents the model’s output when all features are replaced by their baseline values.

\item[Rank-Missingness:] If the inclusion of a feature $i$ does not change the effectiveness score for any ordering as compared to when the feature is replaced by its baseline value, it must be assigned an attribution of $0$.  

More formally, if for all $S \subseteq M\setminus \{i\}$
we have $V_R(f_R(S \cup i, \vec x)) = V_R(f_R(S, \vec x))$, then $\phi_R(f_R,\vec x,i)=0$. \\

\item[Rank-Symmetry:] Two features $i,j\in M$ are called \emph{symmetric} if they produce equal effectiveness scores when added individually to every possible feature coalition excluding them. The Rank-symmetry axiom requires that we assign equal attributions to symmetric features.

If for all $S\subseteq M\setminus\{ i,j\}$ we have $V_R(f_R(S \cup i, \vec x)) = V_R(f_R(S \cup j, \vec x))$, then $\phi_R(f_R,\vec x,i)=\phi_R(f_R,\vec x,j)$.\\

\item[Rank-Monotonicity:] For input $\vec x$ and two ranking models $f_R$ and $f_R'$, if for all coalitions $S$ where $S \subseteq M\setminus \{i\}$, the marginal effectiveness gain of adding $i$ to $S$ in $f_R$, is greater than or equal to the marginal effectiveness gain of adding $i$ to $S$ in $f_R'$, the attribution of $i$ in $f_R$ has to be greater in $f_R$ as compared to $f_R'$.

If for all $S\subseteq M\setminus \{i\}$ we have $V_R(f_R(S \cup i, \vec x)) - V_R(f_R(S,\vec x)) \geq V_R(f_R'(S \cup i,\vec x)) - V_R(f_R'(S,\vec x))$
then $\phi_R(f_R,\vec x,i) \geq \phi_R(f_R',\vec x,i)$.
\end{description}

\subsection{A Generalized Ranking Effectiveness Metric ($\GREM$)} 
\label{sec:ranking_props}
In this section, we study the desired properties for the ordering effectiveness score $V_R$. The IR evaluation community uses a number of measures to evaluate the effectiveness of a search engine's ranking system. Different measures focus on different aspects of the retrieval and different needs of users \citep{gupta2019correlation}. A key part of all measures is the relationship between a document's rank and the relevance score: they are expected to be consist with each other. Summarizing most work on IR evaluation leads us to two high-level properties that we expect any metric quanitifying effectiveness of a ranking to provide.

Given a query $\vec{q}$, a set of $n$ documents $D = [\vec{d}_1, \vec{d}_2, \ldots, \vec{d}_k]$, a human-generated relevance value  $rel_j$ associated with each document $d_j$, and a metric value $\GREM_n$ determined from the set of query-document pairs, consider the ordered list $f_R({\vec q,D})$. For an evaluation of the ordering, it is ideal for the metric ($\GREM_n$) to satisfy the following axioms:

\begin{description}[leftmargin=0.2cm]
    \item[Relevance Sensitivity:] If the relevance score of any document increases, while the relevance scores of all other documents remain the same, the metric value \( \GREM_n \) should not decrease. 
    
    Formally, if \( rel_j' > rel_j \) for some \( j \), and \( rel_i' = rel_i \) for all \( i \neq j \), then \( \GREM_n' \geq \GREM_n \), where \( \GREM_n' \) is the metric computed with the updated relevance scores \( \{rel_1', rel_2', \ldots, rel_k'\} \).

    \item[Position Sensitivity:] If two documents \( \vec{d}_i \) and \( \vec{d}_j \) are swapped in the ranking, moving the document with the higher relevance score to a higher (better) rank, the metric value \( \GREM_n \) should not decrease.

    \emph{Formally}, let \( o \) be the original ordering, and \( o' \) be the ordering where only \( \vec{d}_i \) and \( \vec{d}_j \) are swapped. If \( rel_i \geq rel_j \) and \( \text{rank}_o(\vec{d}_i) > \text{rank}_o(\vec{d}_j) \), then \( \GREM_n' \geq \GREM_n \), where \( \GREM_n' \) is the metric computed with the new ordering \( o' \).

\end{description}

\textit{Relevance Sensitivity} intuitively captures the principle that more relevant documents should enhance the overall effectiveness, while \textit{Position Sensitivity} reflects that users prefer to find relevant information earlier in the ranking. These axioms are fundamental to any rank effectiveness metric because they align with the core goal of retrieval systems: to present the most relevant information prominently, thereby maximizing user satisfaction and system effectiveness.

\begin{theorem} \label{theorem:valuefunc}
An ordered list evaluation metric satisfies the axioms of \textit{Relevance Sensitivity} and \textit{Position Sensitivity} if and only if it can be represented as
\begin{align*}
    \GREM_n &= \sum_{j=1}^n g(rel_j) \cdot h(j),
\end{align*}
where $g(rel_j)$ is a non-decreasing function of the relevance score $rel_j$ (gain function), and $h(j)$ is a non-negative, non-increasing function of the rank position $j$ (discount function).
\end{theorem}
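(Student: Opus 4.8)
The statement is an equivalence, so I would prove the two implications separately; the ``\emph{if}'' direction is a routine verification, and the ``\emph{only if}'' direction is where the real work lies --- recovering the product structure $g(\cdot)\,h(\cdot)$ from two monotonicity axioms. For ``\emph{if}'', assume $\GREM_n = \sum_{j=1}^n g(rel_j)\,h(j)$ with $g$ non-decreasing and $h$ non-negative and non-increasing, where $rel_j$ denotes the relevance of the document currently at rank $j$. For \emph{Relevance Sensitivity}, hold the ordering fixed and raise the relevance of the document at rank $p$ from $rel_p$ to $rel_p' > rel_p$; only that term changes, and $\GREM_n' - \GREM_n = (g(rel_p') - g(rel_p))\,h(p) \ge 0$. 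For \emph{Position Sensitivity}, if $\vec d_i$ and $\vec d_j$ occupy ranks $p > q$ with $rel_i \ge rel_j$ and $o'$ swaps them, only two terms change and $\GREM_n' - \GREM_n = (g(rel_i) - g(rel_j))(h(q) - h(p)) \ge 0$, since $g(rel_i) \ge g(rel_j)$ and $h(q) \ge h(p)$. I would spell this short calculation out in full.

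\textbf{Proof plan (``only if'' direction).} Here I would proceed in four steps. (1) Argue that a metric obeying the axioms aggregates additively over ranks, $\GREM_n = \sum_{j=1}^n \phi_j(rel_j)$, where $\phi_j$ encodes the contribution of whichever document sits at rank $j$. (2) Apply \emph{Relevance Sensitivity} to a perturbation of a single relevance value to conclude each $\phi_j$ is non-decreasing. (3) Apply \emph{Position Sensitivity} to a configuration in which rank $q$ holds relevance $a$ and rank $p > q$ holds $b \ge a$, then swap them: this yields $\phi_q(b) - \phi_q(a) \ge \phi_p(b) - \phi_p(a)$ for all $b \ge a$ and $q < p$, i.e.\ the marginal value of relevance is weakly larger at better ranks. (4) Combine (2)--(3) to split each $\phi_j$ as a single rank-independent gain $g$ times a rank weight $h(j)$ (absorbing additive offsets into the baseline): then $g$ inherits non-decreasingness from (2), while $h$ inherits non-negativity from (2) and non-increasingness from (3). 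Setting $g(r) := \phi_1(r)/h(1)$ and checking consistency across ranks finishes the representation.

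\textbf{Expected main obstacle.} Steps (2)--(3) are immediate from the axioms, and step (1) is the standard IR modeling premise that a ranked list's quality is a position-discounted sum of per-document gains. The delicate point is step (4): the two monotonicity axioms constrain the $\phi_j$ only through additive inequalities, so promoting those to the multiplicative factorization $\phi_j = g \cdot h(j)$ needs the additional (mild but essential) structural hypothesis --- implicit in the very notion of a gain/discount metric --- that a document's effect depends on its relevance only through one fixed gain function and on its rank only through one fixed discount. I would state that hypothesis explicitly, and the bulk of the writeup would be a careful verification that, under it, the axioms force exactly the claimed signs and monotonicities of $g$ and $h$.
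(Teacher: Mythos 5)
Your ``if'' direction is exactly the paper's sufficiency argument: the same two one-line computations, $(g(rel_p')-g(rel_p))h(p)\ge 0$ and $(g(rel_i)-g(rel_j))(h(q)-h(p))\ge 0$. That part is correct and complete.

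For the ``only if'' direction, your plan is structurally parallel to the paper's (additive separability over ranks, then factorization of the per-rank functions into $g\cdot h(j)$), but it differs in two substantive ways, both to your credit. First, your step (3) extracts the \emph{correct} consequence of Position Sensitivity for an additive metric, namely the supermodularity inequality $\phi_q(b)-\phi_q(a)\ge\phi_p(b)-\phi_p(a)$ for $q<p$, $b\ge a$; the paper instead swaps two documents of \emph{equal} relevance, which in an additive metric changes nothing and yields only $0\ge 0$, so its claimed conclusion $f_i(r)\ge f_j(r)$ does not actually follow. Second, you are explicit that both the additive decomposition (your step (1)) and the multiplicative factorization (your step (4)) require structural hypotheses beyond the two axioms. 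That honesty matters, because the necessity direction is false as literally stated: the metric $\max_j rel_j\, h(j)$ with $h$ non-negative and non-increasing satisfies both Relevance Sensitivity and Position Sensitivity but is not of the form $\sum_j g(rel_j)h(j)$, and even granting additivity, the inequalities from steps (2)--(3) constrain the $\phi_j$ only up to supermodularity and cannot by themselves force $\phi_j=g\cdot h(j)$. The paper hides the same gap behind an added differentiability assumption, a hand-waved claim that the mixed partials $\partial^2\phi/\partial rel_i\,\partial rel_j$ must vanish, and an unproven assertion that the ratio $f_j(r)/f_1(r)$ is independent of $r$. So: your proposal proves what is provable, takes essentially the paper's route where that route works, and correctly locates (and proposes to patch with an explicit hypothesis) the step at which the paper's own necessity proof is not rigorous.
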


 Proof of Theorem \ref{theorem:valuefunc} is sketched in Appendix \ref{sec : ranking_axioms}. Examples of gain functions $g(rel_j)$ include linear gain $rel_j$ and exponential gain $2^{rel_j}-1$. Examples of discount functions $h(j)$ include no discount $h(j)=1$, logarithmic discount $h(j)=\frac{1}{\log_2(j+1)}$
and reciprocal discount $h(j)=\frac{1}{j}$. Most widely-used IR metrics like \CG, \DCG, \NDCG, reciprocal rank, precision@k belong to the $GREM_n$ framework.

% Some examples of popular rank evaluation metrics, in terms of $\GREM$ are:
% \begin{description}[leftmargin=0pt]
%     \item [Precision@K:] $g(rel_j,j)= rel_j$ with binary relevance and $h(j)=1$ (for top-K documents) and $h(j) = \infty$ otherwise. 
%     \item [Discounted Cumulative Gain (DCG):] $g(rel_j,j)= rel_j$ or $2^{rel_j}-1$  (depending on the framing) and $h(j)=\log_2(j+1)$. 
%      \item [Normalized Discounted Cumulative Gain (NDCG):] $g(rel_j,j)= rel_j$ or $2^{rel_j}-1$  (depending on the framing) and $h(j)=\log_2(j+1)$. 
% \end{description}

\subsection{RankSHAP Attributions}\label{sec:shapley_theorem}

Building upon the Shapley axioms adapted for ranking tasks (Section~\ref{sec:shapley_props}) and the characterization of ordering effectiveness metrics (Theorem~\ref{theorem:valuefunc}), we conclude the following:

\begin{theorem}\label{thm:rankSHAP}
Let $V_R$ be a ranking effectiveness metric that belongs to $GREM_n$, as characterized in Theorem~\ref{theorem:valuefunc}. Then, the Shapley value $\phi_R$ computed with respect to $V_R$ is the unique feature attribution method that satisfies the axioms of Rank-Efficiency, Rank-Missingness, Rank-Symmetry, and Rank-Monotonicity.

Specifically, the attribution for feature $i$ is given by:
\[
\phi_R(f_R, \vec{x}, i) = \sum_{S \subseteq M \setminus \{i\}} \frac{|S|! \, (m - |S| - 1)!}{m!} \left[ V_R\big(f_R(S \cup \{i\}, \vec{x})\big) - V_R\big(f_R(S, \vec{x})\big) \right],
\]
where the sum is over all subsets $S$ of $M \setminus \{i\}$, $|S|$ is the cardinality of $S$, and $m$ is the total number of features.
\end{theorem}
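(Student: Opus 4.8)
The plan is to show that, once the instance $\vec x$ is fixed, the ranking attribution problem is literally a transferable-utility cooperative game in the $m$ features, under which the four ranking axioms become the four classical Shapley axioms; the statement then reduces to the uniqueness of the Shapley value already recalled in Section~\ref{sec:Background}. Concretely, fix $\vec x$ and define the set function $v : 2^M \to \mathbb{R}$ by $v(S) := V_R\big(f_R(S,\vec x)\big) - V_R\big(f_R(\emptyset)\big)$, so that $v(\emptyset)=0$. This is well defined because $V_R \in GREM_n$ assigns a real number $\sum_{j} g(rel_j)\,h(j)$ to every ordering (Theorem~\ref{theorem:valuefunc}) and $f_R(S,\vec x)$ is an ordering of $D$ for every coalition $S$; moreover the constant shift cancels in every marginal contribution, $v(S\cup\{i\})-v(S)=V_R\big(f_R(S\cup\{i\},\vec x)\big)-V_R\big(f_R(S,\vec x)\big)$, so the displayed closed form is the Shapley value of $v$ verbatim.

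The key step is the axiom dictionary, which I would carry out explicitly. Modelling "feature $i$ replaced by its baseline value" by "$i \notin S$" (so $f_R(\emptyset)$ is the all-baseline ordering and $f_R(M,\vec x)=f_R(\vec x)$), one checks that Rank-Efficiency is the efficiency axiom $\sum_i \phi_i(v) = v(M)$; Rank-Missingness is the missingness/null-player axiom; Rank-Symmetry is the symmetry axiom; and Rank-Monotonicity is precisely strong monotonicity applied to the pair of games $v, v'$ induced by $f_R, f_R'$ on the common player set $M$. Given this translation, uniqueness is immediate: by the classical characterization stated in Section~\ref{sec:Background} the Shapley value is the unique map on cooperative games satisfying Efficiency, Missingness, Symmetry and Monotonicity, hence any $\phi_R(f_R,\vec x,\cdot)$ obeying the four ranking axioms must agree, on the game $v$, with the Shapley value of $v$, i.e.\ with the displayed expression. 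For the converse (existence) I would verify directly that the displayed formula satisfies all four ranking axioms: Efficiency by the standard telescoping of the weighted marginal contributions down to $v(M)-v(\emptyset)$; Missingness and Symmetry termwise from the definition; and Monotonicity because each bracketed term is a marginal contribution multiplied by a nonnegative weight, so a termwise inequality between $v$ and $v'$ is inherited by the weighted sum.

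The main obstacle is not a hard computation but getting the axiom dictionary exactly right — in particular, noting that Rank-Monotonicity quantifies over two ranking models and therefore corresponds to the across-games monotonicity used in the uniqueness theorem (as opposed to a within-game statement), and that $\phi_R(\cdot,\vec x,\cdot)$ should be read as a solution concept defined on cooperative games over $M$, so that the classical characterization applies at the right level of generality. It is also worth remarking that, beyond ensuring $v$ is a well-defined real-valued game, the $GREM_n$ hypothesis plays no role in the uniqueness/existence argument itself; it is what makes $V_R$, and hence the attributions $\phi_R$, a meaningful ranking quantity and what links this theorem to Theorem~\ref{theorem:valuefunc}. Once the dictionary is in place, the result is a direct application of the Shapley characterization.
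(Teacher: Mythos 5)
Your proposal is correct and takes essentially the same route as the paper: both reduce the statement to the classical uniqueness characterization of the Shapley value by treating the fixed instance $\vec x$ as inducing a cooperative game over the feature set $M$ and translating the four rank axioms into their classical counterparts. Your write-up is in fact more careful than the paper's own proof — you explicitly construct the game $v(S)=V_R(f_R(S,\vec x))-V_R(f_R(\emptyset))$ and correctly identify the relevant characterization as the monotonicity-based one of \citet{shapleyyoung} (consistent with Appendix~\ref{sec : ShapleyAxioms}), whereas the paper's proof loosely appeals to ``Additivity (which corresponds to Monotonicity in our context)''.
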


\begin{proof}
The proof is a direct application of Shapley's theorem \citep{shapley1953value} in the context of ranking models. By adapting the classical Shapley axioms to the ranking task (as detailed in Section~\ref{sec:shapley_props}), we establish a set of axioms (Rank-Efficiency, Rank-Missingness, Rank-Symmetry, Rank-Monotonicity) tailored to ranking feature attributions. Given that $V_R$ satisfies Relevance Sensitivity and Position Sensitivity (as per Theorem~\ref{theorem:valuefunc}), it ensures that $V_R$ is appropriate for evaluating the effectiveness of document rankings in response to feature subsets.

Shapley's theorem states that the Shapley value is the unique method that satisfies the axioms of Efficiency, Null Player (Missingness), Symmetry, and Additivity (which corresponds to Monotonicity in our context). By applying this theorem to the ranking scenario with our adapted axioms, we conclude that the Shapley value $\phi_R$ is the unique feature attribution method satisfying all the specified axioms. Therefore, the attributions computed using the Shapley value formula provide a fair and consistent distribution of the total ranking effectiveness among the features, reflecting each feature's contribution to the ranking decision.
\end{proof}
Hence, we recommend practitioners use \emph{RankSHAP} for their ranking attribution needs due to its adherence to a set of fundamental properties.

\subsection{Approximating RankSHAP}\label{sec:approximating_RankSHAP}
Among the metrics in \(\GREM\), we use \(\NDCG\) in further experiments due to its appealing properties discussed in Appendix \ref{sec:appNDCG}. Computing the exact Shapley value is NP-Hard \citep{shapley1953value} because it grows exponentially with the number of features. Additionally, \(\NDCG\) computation for each subset increases the complexity by \(O(n \log n)\). Relevance labels, necessary for ranking value functions, are often unavailable. Here we discuss how we approximate the RankSHAP solution for practical applications.

\begin{description}[leftmargin=0cm]
\item[Inferring NDCG:]
The NDCG score relies on knowing the relevance scores for each document, ideally obtained from ground truth labels in a human-annotated dataset, which are often unavailable. Several neural rankers (e.g., BERT \citep{nogueira2019passage}) assign a similarity score to each query-document pair, which can infer relevance and compute \NDCG. In the absence of similarity scores, relevance can be indirectly inferred using methods such as click-through rate, time spent on page, bounce rate, scroll depth, bookmarking, and social media shares \citep{zoeter2008decision} or via heuristic based relevance measures such as BM25.

\item[Kernel-RankSHAP:] While the Shapley value itself is NP-hard to compute, KernelSHAP \citep{lundberg2017unified} leverages a kernel based model to faithfully approximate the Shapley value. 
It is a linear approximation of the ranking model $f_R$ at instance $\vec x$, which extends the LIME optimization function with a defined kernel and loss function. Below we write it for ranking attributions and name it Kernel-RankSHAP. Let $G$ be the class of all linear additive attributions. $\phi_R(f_R,\vec x,i) = \arg \min_{g \in G} L(f_R,g,\pi_{\vec x}) \ \text{where} \ L(f_R,g,\pi_{\vec x}) = \sum_{\vec z \in Z}[\NDCG(f_R(\vec z)) - \NDCG(g(\vec z))]^2 \pi_{\vec x}(\vec z) \ \text{and} \ \pi_{\vec x}(\vec z) = \frac{(m-1)}{{m\choose{ |\vec z|}}|\vec z|(m-|\vec z|)}$ .

\item[QII vs Kernel-SHAP:]Another popular way to approximate Shapley values is to use sampling based QII~\citep{datta2016algorithmic} in place of kernels. The decision to use kernels rather than sampling in high-dimensional feature spaces is computationally motivated. Sampling can introduce significant noise, reducing the reliability of estimates. Kernels, however, can more effectively approximate feature contributions without the extensive computation and noise associated with random sampling, especially in large feature spaces.

\end{description}

%new sec
\section{Axiomatic Analysis of Rank Feature Attribution algorithms}
Previously, we discussed the importance of feature attribution methods satisfying the foundational Shapley axioms to be considered effective for IR feature attributions. Here, we mathematically formalize existing ranking feature attribution methods and analyze them axiomatically to determine if they fulfill Shapley criteria. 

\begin{table*}[h!]
    \centering
     \caption{Analysis of Ranking Feature Attribution algorithms EXS, RankLIME, RankingSHAP and RankSHAP for axiomatic compliance of Rank-Shapley axioms. }
    \begin{tabular}{|c|c|c|c|c|} \hline
    Algorithm & Rank-Efficiency & Rank-Missingness & Rank-Symmetry & Rank-Monotonicity \\ \hline 
     EXS    & \ding{55} & \ding{55} & \ding{55} & \ding{55} \\ 
     RankLIME    & \ding{55} & \ding{55} & \ding{55} & \ding{55}  \\ 
     RankingSHAP &  \ding{55}     & \ding{55}  & \ding{55}  & \ding{55}\\
     RankSHAP & \checkmark & \checkmark & \checkmark & \checkmark  \\ \hline 
    \end{tabular} 
    \label{tab:axioms}
\end{table*}

\subsection{Optimization Function}
In order to study these feature attribution algorithms individually, we write out their KernelSHAP optimization function and use them to analyze if the algorithms satisfy the axioms of \textit{Rank-Efficiency}, \textit{Rank-Missingness}, \textit{Rank-Symmetry} and \textit{Rank-Monotonicity}.
\begin{align*}
 L_{\text{RANKLIME}}(f_R,g,\pi_{\vec{x}}) &= \sum_{\vec{z} \in Z} ApproxNDCG(f_R(\vec{z}), g(\vec{z})) \ \pi_{\vec{x}}(\vec{z}) \\
 L_{\text{RANKINGSHAP}}(f_R,g,\pi_{\vec{x}}) &= \sum_{\vec{z} \in Z}[\tau(f_R(\vec{z}),g(\vec{z}))]^2 \ \pi_{\vec{x}}(\vec{z}) \\
    L_{\text{RANKSHAP}}(f_R,g,\pi_{\vec{x}}) &= \sum_{\vec{z} \in Z}[NDCG(f_R(\vec{z})) - NDCG(g(\vec{z}))]^2 \ \pi_{\vec{x}}(\vec{z})
\end{align*}

Here, $f_R$ is the original ranking model, $g$ is the surrogate model used for explanation, $\pi_{\vec{x}}$ is the KernelSHAP weighting kernel, $Z$ is the set of all possible coalitions (subsets of features), $\tau$ is Kendall's tau rank correlation coefficient, and $\text{ApproxNDCG}$ is an approximation of the Normalized Discounted Cumulative Gain (\NDCG). The details of the axiomatic analysis are presented in Appendix \ref{sec:aapaxiomaticanalysis}. The results are summarized in Table \ref{tab:axioms}. Our analysis reveals that RankSHAP is the only method that have a decomposable and additive value function and as a result satisfies the fundamental Shapley axioms,  making it reliable to use for ranking feature attributions.

\begin{table*}[h!]
\centering
\caption{ Comparing performance between different feature attribution methods (Random, EXS, RankLIME, RankingSHAP and RankSHAP) for token-based explanations of textual rankers (BM25, BERT, T5 and LLAMA2) based on model fidelity (Fidelity) and weighted-fidelity (wFidelity) with ranking models fine tuned on the MS MARCO (MM) and TREC ROBUST 2004 (R04) datasets. For these experiments, we generate feature attributions for each ranking models with the 10, 20 and 100 most relevant documents for a particular query respectively.}

\label{tab:res1}
\resizebox{\columnwidth}{!}{
    \begin{tabular}{|c|c|c|c|c|c|c|c|} 
    \hline
    Ranker $\downarrow$ & & \multicolumn{2}{|c|}{Top-10}  & \multicolumn{2}{|c|}{Top-20}  & \multicolumn{2}{|c|}{Top-100}  \\ \hline \hline
    Metric $\rightarrow$ & &Fidelity & wFidelity & Fidelity & wFidelity & Fidelity & wFidelity \\ \hline 
    \multirow{ 4}{*}{MM/BM25} & Random     & 0.08 & 0.04& -0.05 & 0.01 & -0.09 & 0.00\\ 
    & EXS     &  0.39 & 0.34 & 0.31 & 0.29 & 0.24 & 0.17 \\ 
    & RankLIME     &0.45 & 0.37 & 0.37 & 0.29 & 0.26 & 0.20\\ 
    & RankingSHAP     &0.52 & 0.41 & 0.45 & 0.33 & 0.31 & 0.24\\ 
    & RankSHAP     & \textbf{0.63} & \textbf{0.48} & \textbf{0.54} & \textbf{0.40} & \textbf{0.47} &  \textbf{0.33}\\ \hline 
    
    \multirow{ 4}{*}{MM/BERT} & Random     & 0.12 & 0.02& -0.09 & 0.01 & -0.12 & 0.00\\ 
    & EXS     &  0.36 & 0.21 & 0.28 & 0.18 & 0.11 & 0.08  \\ 
    & RankLIME     &0.41 & 0.24 & 0.32 & 0.24 & 0.23 & 0.18\\ 
    & RankingSHAP     &0.45 & 0.29 & 0.38 & 0.27 & 0.28 & 0.24\\ 
    & RankSHAP     & \textbf{0.59} & \textbf{0.41} & \textbf{0.45} & \textbf{0.38} & \textbf{0.39} & \textbf{0.29}\\ \hline 
    
    \multirow{ 4}{*}{MM/T5} & Random     & 0.05 & 0.03 & -0.11 & 0.01  & -0.04 & 0.00\\ 
    & EXS     & 0.36 & 0.21 & 0.30 & 0.17 & 0.11  & 0.06\\ 
    & RankLIME     & 0.35 & 0.24 & 0.35 & 0.20 & 0.27 & 0.17\\ 
    & RankingSHAP     & 0.41 & 0.30 & 0.38 & 0.32 & 0.31 & 0.22\\ 
    & RankSHAP     & \textbf{0.56} & \textbf{0.39} & \textbf{0.43} & \textbf{0.37}  & \textbf{0.36} & \textbf{0.29}\\ \hline

     \multirow{ 4}{*}{MM/LLAMA2} & Random     & 0.04 & 0.01 & -0.2 & 0.0  & 0.02 & 0.1\\ 
    & EXS     & 0.39 & 0.20 & 0.32 & 0.15 & 0.17  & 0.05\\ 
    & RankLIME     & 0.35 & 0.24 & 0.35 & 0.20 & 0.27 & 0.17\\ 
    & RankingSHAP     & 0.44 & 0.33 & 0.37 & 0.23 & 0.31 & 0.22\\ 
    & RankSHAP     & \textbf{0.60} & \textbf{0.42} & \textbf{0.45} & \textbf{0.39}  & \textbf{0.39} & \textbf{0.32}\\ \hline \hline

    \multirow{ 4}{*}{R04/BM25} & Random     & 0.08 & 0.04& -0.04 & 0.00 & -0.07 & 0.00\\ 
    & EXS     &  0.37 & 0.33 & 0.30 & 0.26 & 0.22 & 0.18 \\ 
    & RankLIME     &0.43 & 0.36 & 0.36 & 0.25 & 0.25 & 0.18\\ 
     & RankingSHAP     &0.45 & 0.35 & 0.37 & 0.29 & 0.26 & 0.24\\ 
    & RankSHAP     & \textbf{0.61} & \textbf{0.44} & \textbf{0.52} & \textbf{0.38} & \textbf{0.45} &  \textbf{0.31}\\ \hline 
    
    \multirow{ 4}{*}{R04/BERT} & Random     & 0.12 & 0.02& -0.09 & 0.01 & -0.12 & 0.00\\ 
    & EXS     &  0.35 & 0.20 & 0.27 & 0.16 & 0.10 & 0.07  \\ 
    & RankLIME     &0.40 & 0.22 & 0.31 & 0.23 & 0.23 & 0.17\\ 
     & RankingSHAP     &0.43 & 0.26 & 0.36 & 0.28 & 0.28 & 0.23\\ 
    & RankSHAP     & \textbf{0.59} & \textbf{0.41} & \textbf{0.45} & \textbf{0.38} & \textbf{0.39} & \textbf{0.29}\\ \hline 
    
    \multirow{ 4}{*}{R04/T5} & Random     & 0.05 & 0.03 & -0.11 & 0.01  & -0.04 & 0.00\\ 
    & EXS     & 0.38 & 0.21 & 0.32 & 0.17 & 0.13  & 0.06\\ 
    & RankLIME     & 0.38 & 0.24 & 0.35 & 0.20 & 0.28 & 0.17\\ 
    & RankingSHAP     & 0.41 & 0.31 & 0.38 & 0.24 & 0.32 & 0.24\\ 
    & RankSHAP     & \textbf{0.56} & \textbf{0.39} & \textbf{0.43} & \textbf{0.37}  & \textbf{0.36} & \textbf{0.29}\\ \hline 
    \end{tabular}
    }
\end{table*}

%new sec
\section{Computational and Human Evaluation Studies} \label{sec:Experiments}
In this section, we first describe our experimental design. Next we evaluate the performance of our attribution algorithm based on various metrics. Lastly we conduct a user study to investigate how well different attribution algorithms align with human intuition for the ranking task.

\subsection{Experimental Settings}
\begin{description}[leftmargin=0cm]
    \item[Dataset:] We test our hypothesis on two datsets : (i) the MS MARCO \citep{msmarco} passage reranking dataset (ii) the TREC 2004 Robust track dataset (Robust04, \citep{voorhees2003overview}). MS MARCO  is a large-scale dataset aggregated from anonymized Bing search queries containing $>8M$ passages from diverse text sources. The average length of a passage in the MS MARCO dataset is 1131 words. Similarly, Robust04 is aggregated from News Articles and Descriptions serve as its queries. It contains over 528,000 news articles, 250 queries and binary human-annotated relevance judgements. 

    \item[Ranking models:] For our ranking models, we use the classic BM25 ranker \citep{robertson1994simple} as well as models based on the BERT \citep{devlin-etal-2019-bert}, T5 \citep{2020t5} and Llama2 \citep{touvron2023llama} large language models. We in particular use the versions of these LLMs fine-tuned on the MS MARCO and Robust04 datasets for the document re-ranking task, released by various authors ~\citep{nogueira2019passage,nogueira2020document,ma2023fine}. All the above ranking approaches are score based rankers, i.e they return a relevance score corresponding to each document. This makes relevance score estimation straightforward for $\NDCG$ computation, as discussed previously, in Section \ref{sec:Background}. 

    \item[Experimental Settings:] Similar to related works, we randomly sample 250 queries from the test sets of the MS MARCO and Robust04 datasets, retrieving the 100 highest scoring documents for each query. Using these query-document sets, we apply the ranking models to obtain an ordered list of documents. We then generate feature attributions for the top-10, top-20, and top-100 documents. Stemmed tokens from the vocabulary of the query-document sets as bag of words make up the features for which we generate attributions. Binary feature values are assigned based on their presence or absence. I.e if a feature (token) is excluded from a coalition, all occurrences of that token are omitted from the query and documents for that pass of the model~\citep{ribeiro2016should}.

    \item[Competing Systems: ] We generate feature attributions for each query and corresponding ordered-document set using EXS \citep{singh2019exs}, RankLIME \citep{chowdhury2023rank}, RankingSHAP \citep{heuss2024rankingshap}, and our proposed method, RankSHAP. For EXS, we utilize the rank-based setting to allow direct comparison with RankSHAP. For RankLIME, we compute attributions using the NeuralNDCG loss function in the single-perturbation setting. In the case of RankingSHAP, we employ Kendall’s Tau to calculate the marginal contributions. To ensure consistency across all methods, we extend KernelSHAP for implementation and apply the same masking techniques as used in LIME. Each algorithm is limited to 5,000 neighborhood samples per decision. For evaluation, we process only the top 7 most significant features produced by each algorithm, adhering to human comprehension limits \citep{chowdhury2023rank}. 

    We also evaluate the performance of RankSHAP using alternative value functions that satisfy \GREM, including \MAP, \CG, and \DCG. Additionally, we compare the performance of RankSHAP when using explicit relevance labels to its performance when employing heuristic-based relevance labels, such as those derived from BM25.

    \item[Evaluation Metrics: ] Apart from human evaluations, feature attributions can be evaluated using either fidelity-based metrics or via reference-based metrics. Since ground truth attributions are not available in this case, we stick to fidelity-based metrics, which measure the degree to which an attribution algorithm is successful in  recreating the underlying model decision \citep{anand2022explainable,chowdhury2023rank}. For this, we use the metrics \textit{Fidelity} and its weighted version \textit{wFidelity}. Given a feature attribution and a set of documents, we reconstruct the attribution ordering by sorting linearly-combined document features multiplied with their attributions. For \textit{Fidelity}, we report the simplified unweighted Kendall's Tau between the original ranking model's ordering and the attribution-reconstructed ordering. For \textit{wFidelity}, we penalize each swap in passages found in the reconstructed ordering by the difference in their positions. 
    More formally, let $o_{f_R} = f_R(\vec x)$ be the ranking output by $f_R$, and let $o_\phi$ be the ranking of documents by their total importance according to some feature attribution method $\phi$; that is, each document in $\vec x$ is given a score of $\sum_{i=1}^M \phi_R(f_R,\vec x,i) \times x_i$.
    Given two rankings $a$ and $b$, let $r_a[i]$ be the rank of the $i$-th element in $a$. We set $\tau(a,b)$ to be the Kendall's tau distance between $a$ and $b$, i.e.
    \begin{align*}
    \frac{2}{n(n-1)} \sum_{i < j} \text{sgn}(r_a[i] - r_a[j]) \cdot \text{sgn}(r_b[i] - r_b[j])
    \end{align*}
    and let $\tau_w(a,b)$ be the weighted Kendall's Tau distance: 
    \begin{align*}
    \frac{\sum_{i < j} w_{ij} \cdot (\text{sgn}(r_a[i] - r_a[j]) \cdot \text{sgn}(r_b[i] - r_b[j]))}{\sum_{i < j} w_{ij}}
    \end{align*}
    Here, $w_{ij}$ is $|r_a[i]-r_a[j]|$. 
    \textit{Fidelity} is defined as  $\tau(o_{\phi},o_{f_R})$, and 
    \textit{wFidelity} is set to be $\tau_w(o_{\phi},o_{f_R})$ \citep{kendall1938new,vigna2015weighted}. The range of both \textit{Fidelity} and \textit{wFidelity} is $[-1,1]$.

\end{description}

\subsection{Experimental Results}
The results of the above experiments can be seen in Table \ref{tab:res1}. We discuss them below.

\begin{description}[leftmargin=0cm]
    \item[Across competing Systems:] RankSHAP surpasses the top-performing competitor by $25.78\%$ on Fidelity and $19.68\%$ on weighted Fidelity, averaged across all ranking models, datasets, and document settings. This demonstrates that RankSHAP is most effective at recovering the original ranking model's order. RankingSHAP \citep{heuss2024rankingshap} follows closely. All evaluated methods, except for Random, show a positive correlation between the original and reconstructed rankings.

    \item[Impact of the number of documents:] Both metrics decline as the number of documents in the ordered set for which we generate attributions increases. Specifically, there is a $20\%$ average performance drop between attributions for $10$ documents and attributions for $20$ documents, and a further $14.6\%$ average decrease in \textit{Fidelity} between $20$ and $100$ documents for RankSHAP, suggesting a performance reduction that scales with the logarithm of the number of documents. Notably, while EXS and RankLIME demonstrate similar \textit{Fidelity} when generating attributions for $10$ and $20$ documents, EXS performs significantly worse in the $100$ document setting.

  \item[Across datasets:] We observe that, on average, our ranking systems perform $5.7\%$ worse on the Robust04 dataset compared to the MS MARCO dataset, considering all systems, ranking methods, and metrics. This slightly lower performance on Robust04 can be attributed to its smaller size, which hinders convergence during fine-tuning. Notably, the neural ranking models used in the Robust04 experiments were initially fine-tuned on the MS MARCO dataset, which may further explain the performance difference.

    \item[Across ranking models:] BM25 is a heuristic-based ranking model, while BERT, T5, and LLAMA2 are large neural models. RankSHAP exhibits similar performance on BERT and T5, which is on average $13.3\%$ and $14.7\%$ lower on Fidelity than it's performance on BM25. RankSHAP performs slightly better when explaining LLAMA2 as compared to when explaining BERT and T5. No experiments were conducted with an LLAMA2 ranker on the Robust04 data set due to the lack of a publicly available fine-tuned model for that data set.

    \item [Across RankSHAP variants:] The results of RankSHAP performance with different value functions within \GREM are discussed in Appendix \ref{sec:val_func_comps_rankshap} and Table \ref{tab:rankshap_val}. The results of RankSHAP performance with different value functions paired with explicit vs heuristic relevance measures are discussed in Appendix \ref{sec: expimp_appendix} and Table \ref{tab:relevance_explicit_implicit}.

\end{description}

\subsection{User Study}

The goal of developing RankSHAP is to help humans understand why a model ranks items in a specific order. Towards this, we conduct a user study with 30 participants to assess if RankSHAP aids human understanding. Following Institutional Review Board (IRB) approval each participant is shown sets of 5 short passages along with feature attributions from one of the competing systems, displayed as bar charts (see Figure \ref{fig:sugar} in Appendix \ref{AP : userstudy}). Participants are then asked to (i) re-order the 5 documents from most to least important, and (ii) infer what the query might have been based on the attributions. The first task assesses which algorithm best aligns with human intuition, while the second evaluates how well the attributions explain the ranking task. We sampled query-document sets from the MS MARCO dataset and used BERT to rank the documents. Each participant was shown 10 such passage sets, with attributions randomly selected from one of Random, EXS, RankLIME, RankingSHAP, and RankSHAP. At the end, the Kendall's Tau score (Fidelity) between the original ordering and the predicted ordering was used to identify the algorithm that better reconstructed the underlying model decision. A text-based semantic similarity measuring LLM was used to assess the quality of the queries estimated by the participants. The similarity metric returned scores in the range [0,1], with 1 indicating the estimated query was most similar to the original query. The results of this study are presented in Table \ref{tab:studyresults}.

\begin{table} 
    \centering
    \caption{Table demonstrating the user study results for (i)the passage reordering task, measured using Kendall's Tau (Fidelity) (ii) the query estimation task, measured using GPT-4 similarity ($\mu$) for feature attributions generated using Random, EXS, RankLIME, RankingSHAP and RankSHAP feature attribution algorithms.} \label{tab:studyresults}
    \begin{tabular}{|c|c|c|c|c|c|} \hline
         &  Random & EXS & RankLIME & RankingSHAP & RankSHAP \\ \hline
    Q1 ($\tau$)     & 0.23 & 0.43 & 0.47 & 0.52 & 0.56 \\
    Q2 ($\mu$) & 0.3 & 0.48 & 0.52 & 0.58 & 0.69\\ \hline
    \end{tabular}
\end{table}

\textbf{Results.} Initially, we observe that the randomly generated feature attributions achieve a concordance score of 0.23 on Task 1, which exceeds their metric-evaluated value (Table \ref{tab:res1}, Top-10 Fidelity - Random). This suggests that participants may rely on preconceived notions about the topics while performing the study. The inter-annotator agreement for participants shown the same feature attribution in Task 1 averaged 0.65 (Kendall's Tau). RankSHAP outperformed the strongest baseline by $19.1\%$, following similar trends as reported in Table \ref{tab:res1}. In Task 2, some query estimates produced by all algorithms were off-topic, and we observed substantial variance in the queries generated by different participants for the same set of documents and feature attributions. However, participants shown RankSHAP attributions performed at least $30.9\%$ better than those shown other attributions, indicating that RankSHAP enhances the effectiveness of ranking feature attributions in real-world scenarios.

\section{Related Work} \label{sec:RelatedWork}
 In this section, we discuss additional related works that have not yet been covered in the paper so far. \citet{fernando2019study} extend the DeepSHAP algorithm \citep{lundberg2017unified,shrikumar2017learning} from classification/regression to propose a set of model-intrinsic feature attributions for neural ranking models. Recently, ShaRP~\citep{pliatsika2024sharp} generates feature attributions to explain a particular document's score, its rank, and its presence within the top-k documents, by extending QII~\citep{datta2016algorithmic}. Since ShaRP does not generate attributions for entire ranked lists, it is challenging to integrate it within the RankSHAP framework. \citet{volske2021towards} attempt to explain the performance of neural ranking models using axioms from statistical IR like term frequency and document length \citep{fang2004formal}, lower bounding term frequency \citep{lv2011lower}, and query aspects \citep{wu2012relation}, among others. \citet{singh2020valid} introduce the axioms of \textit{validity} and \textit{completeness} as a means to measure the effectiveness of ranking explanations and propose an algorithm to suggest a small set of features sufficient to explain a ranking decision. LRIME \citep{verma2019lirme} proposes a set of heuristics to improve regression/classification-based feature attribution methods for information retrieval, such as choosing a diverse set of perturbation samples and parameters.  Ranking GAMs \citep{zhuang2020interpretable} offer an inherently interpretable structure that can be distilled into a set of compact piece-wise linear functions with minimal accuracy loss. \citet{anand2022explainable} provide a comprehensive summary of explanation methods proposed for the IR ranking task, also exploring free-text explanations \citep{rahimi2021explaining}, adversarial example-based explanations \citep{raval2020one,wu2023prada}, probing the representation space \citep{choi2022finding}, rationale-based explanations \citep{zhang2021explain,wojtas2020feature}, and more. Our work focuses on generating attributions for cooperative ranking systems, which differs from generating attributions for competitive ranking systems~\citep{hu2022explaining,anahideh2022local,gale2020explaining}. The two frameworks differ mainly in that in cooperative ranking, entities are ranked based on their inherent attributes (e.g., relevance to a query) without direct competition against each other. In contrast, in competitive rankings, entities (e.g., participants in a chess tournament) are ranked against each other based on their performance or scores.

\section{Conclusion and Limitations} \label{sec:Conclusion}
 We proposed RankSHAP, a valid feature attribution framework for ranking problems. We introduced axioms that a valid feature attribution algorithm should satisfy and provided a solution that meets these criteria. Additionally, we presented a computationally efficient approximation of RankSHAP that outperformed competing systems and conducted an user study to demonstrate its alignment with human intuition. We analyzed existing ranking feature attribution algorithms, to assess their adherence to fundamental axioms. Our work faces typical limitations associated with Shapley values, primarily the assumption of feature independence, which may not hold in practical applications. Moreover, our ranking value functions depend on the availability of relevance scores for each query-document pair. When relevance labels are inferred implicitly, such as through click rates in recommendation systems, RankSHAP can capture biases present in the logs, including recency bias. KernelSHAP is a perturbation-based approach, and the results may be influenced by sampling density.

\subsubsection*{Acknowledgments}
This work was supported in part by the Center for Intelligent Information Retrieval. Any opinions, findings and conclusions
or recommendations expressed in this material are those of the authors and do not necessarily
reflect those of the sponsor.

\bibliography{iclr2025_conference}
\bibliographystyle{iclr2025_conference}

%%%%%%%%%%%%%%%%%%%%%%%%%%%%%%%%%%%%%%%%%%%%%%%%%%%%%%%%%%%%%%%%%%%%%%%%%%%%%%
%%%%%%%%%%%%%%%%%%%%%%%%%%%%%%%%%%%%%%%%%%%%%%%%%%%%%%%%%%%%%%%%%%%%%%%%%%%%%%
%%%%%%%%%%%%%%%%%%%%%%%%%%%%%%%%%%%%%%%%%%%%%%%%%%%%%%%%%%%%%%%%%%%%%%%%%%%%%%
%%%%%%%%%%%%%%%%%%%%%%%%%%%%%%%%%%%%%%%%%%%%%%%%%%%%%%%%%%%%%%%%%%%%%%%%%%%%%%
\newpage
\appendix
\onecolumn

%------Ranking Value Function Axioms
\section{Ranking Axioms} \label{sec : ranking_axioms}
Given a query $\vec{q}$, a set of $n$ documents $D = [\vec{d}_1, \vec{d}_2, \ldots, \vec{d}_k]$, a human-generated relevance value  $rel_j$ associated with each document $d_j$, and a metric value $\GREM_n$ determined from the set of query-document pairs, consider the ordered list $f_R({\vec q,D})$. For an evaluation of the ordering, it is ideal for the metric ($\GREM_n$) to satisfy the following axioms:

\begin{description}[leftmargin=0.2cm]
    \item[Relevance Sensitivity:] If the relevance score of any document increases, while the relevance scores of all other documents remain the same, the metric value \( \GREM_n \) should not decrease. 
    
    Formally, if \( rel_j' > rel_j \) for some \( j \), and \( rel_i' = rel_i \) for all \( i \neq j \), then \( \GREM_n' \geq \GREM_n \), where \( \GREM_n' \) is the metric computed with the updated relevance scores \( \{rel_1', rel_2', \ldots, rel_k'\} \).

    \item[Position Sensitivity:] If two documents \( \vec{d}_i \) and \( \vec{d}_j \) are swapped in the ranking, moving the document with the higher relevance score to a higher (better) rank, the metric value \( \GREM_n \) should not decrease.

    \emph{Formally}, let \( o \) be the original ordering, and \( o' \) be the ordering where only \( \vec{d}_i \) and \( \vec{d}_j \) are swapped. If \( rel_i \geq rel_j \) and \( \text{rank}_o(\vec{d}_i) > \text{rank}_o(\vec{d}_j) \), then \( \GREM_n' \geq \GREM_n \), where \( \GREM_n' \) is the metric computed with the new ordering \( o' \).

\end{description}

\begin{theorem} 
An ordered list evaluation metric satisfies the axioms of \textit{Relevance Sensitivity} and \textit{Position Sensitivity} if and only if it can be represented as
\begin{align*}
    \GREM_n &= \sum_{j=1}^n g(rel_j) \cdot h(j),
\end{align*}
where $g(rel_j)$ is a non-decreasing function of the relevance score $rel_j$ (gain function), and $h(j)$ is a non negative, non-increasing function of the rank position $j$ (discount function).
\end{theorem}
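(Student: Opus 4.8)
The plan is to prove both directions separately, with the "if" direction being routine and the "only if" direction carrying the real content. For the easy direction, assume $\GREM_n = \sum_{j=1}^n g(rel_j) \cdot h(j)$ with $g$ non-decreasing and $h$ non-negative and non-increasing. Relevance Sensitivity follows because increasing $rel_j$ only increases the single term $g(rel_j) h(j)$ (using monotonicity of $g$ and non-negativity of $h$) while leaving all other terms fixed. Position Sensitivity follows from a two-term comparison: if $rel_i \geq rel_j$ and document $i$ currently sits at the worse rank $p > q$ (where $q$ is $j$'s rank), then swapping contributes $g(rel_i)h(q) + g(rel_j)h(p)$ in place of $g(rel_i)h(p) + g(rel_j)h(q)$; the difference is $(g(rel_i) - g(rel_j))(h(q) - h(p)) \geq 0$ since both factors are non-negative. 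All other terms are unchanged, so $\GREM_n' \geq \GREM_n$.

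For the "only if" direction, the approach is to show that the two axioms force the metric to be \emph{separable} (a sum of per-document contributions) and that each contribution factors into a rank part and a relevance part. First I would argue separability: Relevance Sensitivity, applied in both directions (raising and then lowering a single relevance score, or more carefully using it as a monotonicity-in-each-coordinate statement), constrains how $\GREM_n$ can depend jointly on the relevance vector; combined with the fact that Position Sensitivity must hold for \emph{every} relevance assignment consistent with a given order, one derives a functional equation whose solution is additive across positions. Concretely, write $\GREM_n = F(rel_{\sigma(1)}, \dots, rel_{\sigma(n)})$ where $\sigma$ is the permutation realizing the order; Position Sensitivity relates the value under $\sigma$ to the value under $\sigma$ composed with a transposition, and comparing these for all relevance inputs pins down that $F$ splits as $\sum_j h(j) \, g_j(rel_{\sigma(j)})$ for some functions $g_j$. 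Then Position Sensitivity forces all the $g_j$ to coincide with a single non-decreasing $g$ (otherwise one could find a relevance assignment and a swap violating the axiom), and Relevance Sensitivity gives that $g$ is non-decreasing while $h$ is non-negative and non-increasing.

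The main obstacle is establishing separability rigorously from the two axioms alone: the axioms as stated are only \emph{monotonicity} (inequality) conditions, not equalities, so they do not immediately yield a clean functional equation. To get additivity one must exploit that the inequalities hold \emph{with equality in the degenerate cases} — e.g. swapping two documents of equal relevance cannot change the metric (apply Position Sensitivity both ways), and this tightness propagates. I would need to carefully track which configurations of relevance scores force equalities, use these to isolate the marginal effect of each position, and show the cross-terms vanish; handling ties and the boundary between strict and non-strict relevance comparisons is where the argument is most delicate. Since the excerpt states the full proof is only "sketched" in this appendix, I would present the separability step in detail and treat the monotonicity conclusions for $g$ and $h$ as direct consequences of the axioms restricted to one-document and one-swap perturbations.
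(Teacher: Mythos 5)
Your sufficiency direction is correct and is essentially the paper's argument verbatim: the single-term comparison for Relevance Sensitivity and the two-term difference $\left[g(rel_i)-g(rel_j)\right]\left[h(q)-h(p)\right]\geq 0$ for Position Sensitivity. No issues there.

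The necessity direction is where the gap lies, and it is a gap you partly acknowledge but whose proposed repair would fail. Your plan is to derive additive separability from the axioms alone by exploiting the equality cases (e.g., swapping equal-relevance documents leaves the metric unchanged) and arguing that "this tightness propagates" until the cross-terms vanish. This cannot work: both axioms, together with every one of their equality instances, are preserved under composing the metric with an arbitrary strictly increasing function. For example, $\exp(\DCG_n)=\prod_j 2^{rel_j/\log_2(j+1)}$ satisfies Relevance Sensitivity and Position Sensitivity and agrees with $\DCG_n$ on every degenerate/tie configuration you would use, yet it is not additively separable for $n\geq 2$. So no argument proceeding only from the two axioms can establish the representation; the "only if" claim needs an extra structural hypothesis. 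The paper's own proof concedes this implicitly by assuming $\phi$ is continuously differentiable and then asserting that the mixed partials $\partial^2\phi/\partial rel_i\,\partial rel_j$ vanish (itself a leap --- coordinatewise monotonicity does not force vanishing mixed partials, as the $\exp(\DCG_n)$ example again shows), but at least the extra assumption is named. Your second step has the same problem in a milder form: Position Sensitivity only yields the ordering $f_i(r)\geq f_j(r)$ for $i<j$, which does not force the per-position functions to be a common $g$ rescaled by position-dependent constants $h(j)$; the ratio $f_j(r)/f_1(r)$ could genuinely depend on $r$. In short, the easy direction is complete, but your route to the hard direction stalls at exactly the point you identified, and the tie-case equalities you propose to lean on are not strong enough to get you past it.
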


\begin{proof}
We split the proof into two parts:

\begin{enumerate}
    \item[(i)] If the metric satisfies the axioms, then it can be represented in the specified form.
    \item[(ii)] If the metric is in the specified form, then it satisfies the axioms.
\end{enumerate}

\textbf{Necessity}
Assume that the metric 
\[
\GREM_n = \phi\big((rel_1,1), (rel_2,2), \dots, (rel_n,n)\big)
\]
satisfies \emph{Relevance Sensitivity} and \emph{Position Sensitivity}. In addition, assume that $\phi$ is continuously differentiable in each of its $n$ relevance arguments. We will show that under these conditions, there exist functions $g$ and $h$ such that
\[
\GREM_n = \sum_{j=1}^n g(rel_j) \cdot h(j),
\]
with $g$ non-decreasing and $h$ non-increasing.

\textbf{Step 1. Additive Separability.}

For any fixed indices $i\neq j$, fix the relevance values for all documents except those in positions $i$ and $j$, and define the two-variable function
\[
F(a,b) = \phi\big( (rel_1,1), \dots, (a,i), \dots, (b,j), \dots, (rel_n,n) \big).
\]
By \emph{Relevance Sensitivity}, $F$ is non-decreasing in each argument. In addition, by \emph{Position Sensitivity}, the marginal impact of changing the relevance at a higher-ranked position (i.e., with lower index) is at least as great as that for a lower-ranked one. In particular, if $i < j$ then for all $a$ and $b$,
\[
\frac{\partial F}{\partial a}(a,b) \ge \frac{\partial F}{\partial b}(a,b).
\]

Now, consider the mixed partial derivative
\[
\frac{\partial^2 \phi}{\partial rel_i \partial rel_j}.
\]
If this mixed partial were nonzero, then the effect of changing $rel_i$ would depend on the value of $rel_j$, introducing an interaction between the two positions. However, such an interaction would conflict with the following observation: by \emph{Relevance Sensitivity}, increasing $rel_i$ (or $rel_j$) while holding the others fixed must always lead to a non-decrease in $\GREM_n$, regardless of the value of the other variable. A nonzero mixed partial would allow for situations where the marginal effect of changing one relevance value depends on the other, potentially violating the axiom when documents are swapped. Therefore, we conclude that
\[
\frac{\partial^2 \phi}{\partial rel_i \partial rel_j} = 0 \quad \text{for all } i \neq j.
\]
A standard result from multivariable calculus (see, e.g., results on additivity or the “Cauchy equation” in several variables) then implies that $\phi$ is additively separable in the relevance scores. That is, there exist functions $f_j$, one for each rank $j$, such that
\[
\phi\big((rel_1,1), \dots, (rel_n,n)\big) = \sum_{j=1}^n f_j(rel_j).
\]

\textbf{Step 2. Factorization into $g$ and $h$.}

Next, we leverage the \emph{Position Sensitivity} axiom. For any two positions $i$ and $j$ with $i<j$, consider a fixed relevance value $r$. Define
\[
\Delta_{ij}(r) = f_i(r) - f_j(r).
\]
Since swapping a document from position $i$ with one at $j$ (when both have the same relevance $r$) should not improve the metric (indeed, the higher position should be at least as valuable), we must have
\[
f_i(r) \ge f_j(r) \quad \text{for all } r.
\]
Thus, the functions $\{f_j\}$ are ordered by rank in a manner that depends only on the position. In other words, the only difference among the $f_j$ should be a multiplicative “weight” reflecting the importance of the position.

In light of this, choose an arbitrary position (say, $j=1$) and define the gain function
\[
g(r) = f_1(r).
\]
Then for each position $j$, define
\[
h(j) = \frac{f_j(r)}{g(r)}
\]
for any $r$ for which $g(r) > 0$. (A short continuity and monotonicity argument shows that this ratio is independent of the choice of $r$.) By the properties of the $f_j$ and the axioms, it follows that $g(r)$ is non-decreasing in $r$ (by \emph{Relevance Sensitivity}) and $h(j)$ is non-increasing in $j$ (by \emph{Position Sensitivity}). Thus, for every position $j$ we have
\[
f_j(r) = g(r) \cdot h(j).
\]

\textbf{Conclusion.}

Substituting back into the separable form, we obtain
\[
\GREM_n = \sum_{j=1}^n f_j(rel_j) = \sum_{j=1}^n g(rel_j) \cdot h(j),
\]
which is the desired representation.

This completes the proof of necessity.

%-----------------------------------------------------

\textbf{(ii) Sufficiency}

Assume that the metric is defined as
\[
\GREM_n = \sum_{j=1}^n g(rel_j) \cdot h(j),
\]
where $g(rel_j)$ is non-decreasing in $rel_j$ and $h(j)$ is non-increasing in $j$.

We will show that $\GREM_n$ satisfies both axioms.

\textit{Relevance Sensitivity}

If $rel_j' > rel_j$ for some $j$, and $rel_i' = rel_i$ for all $i \neq j$, then
\[
\GREM_n' - \GREM_n = \left[ g(rel_j') - g(rel_j) \right] \cdot h(j) \geq 0,
\]
since $g(rel_j') \geq g(rel_j)$ and $h(j) \geq 0$. Thus, $\GREM_n' \geq \GREM_n$.

\textit{Position Sensitivity}

Suppose we swap $\vec{d}_i$ and $\vec{d}_j$ where $i > j$ and $rel_i \geq rel_j$. The change in the metric is
\begin{align*}
\Delta \GREM_n &= \left[ g(rel_i) \cdot h(j) + g(rel_j) \cdot h(i) \right] - \left[ g(rel_i) \cdot h(i) + g(rel_j) \cdot h(j) \right] \\
&= g(rel_i) \left[ h(j) - h(i) \right] + g(rel_j) \left[ h(i) - h(j) \right] \\
&= \left[ g(rel_i) - g(rel_j) \right] \left[ h(j) - h(i) \right].
\end{align*}

Since $rel_i \geq rel_j$, $g(rel_i) \geq g(rel_j)$. Also, since $j < i$, $h(j) \geq h(i)$. Therefore,
\[
\Delta \GREM_n \geq 0.
\]
Thus, swapping documents to place the higher relevance document higher in rank does not decrease $\GREM_n$.

\end{proof}

\newpage

%...Shapley axioms from Game theory
\section{Basic Shapley Axioms and the Shapley Value} \label{sec : ShapleyAxioms}
The Shapley axioms for feature attributions have been borrowed from coalitional game theory, where they are used to fairly distribute costs/rewards among a set of players. 
We are given a classification/regression model $f$. 
The feature attribution of feature $i$ (can be words, human engineered features etc. ) in the decision made by $f$, towards the model input $\vec s$ is represented by $\phi_i(f,\vec s)$. 

We adopt \citeauthor{shapleyyoung}'s characterization of the Shapley value, also used by \citet{datta2016algorithmic}. Let us assume there are a total of $m$ features. For a feature attribution $\phi_i(f,\vec s)$ to be considered \textit{valid}, it is necessary for it to satisfy certain axioms. We briefly describe the axioms below:

\begin{description}[leftmargin=0cm]
    \item[Efficiency:] The sum of feature attributions over a set of features should be equal to the difference between the model output containing all features and the output containing no features. $\sum_{i=1}^m \phi_i(f,\vec s) = f(\vec s)-f(\phi)$.
     \item[Missingness:] If for every subset of features $S$ s.t features $i \not \in S$,  $f(S \cup i,\vec s)$ is equal to $f(S,\vec s)$, then $\phi_i(f,\vec s)$ must be given an attribution of $0$. $f(S, \vec s)$ represents the case where features not in set S, have been replaced by their reference value, and model input $\vec s$ has been reconstructed.
    \item[Symmetry:] If for every subset of features $S$ s.t features $i,j \not \in S$, the effects of adding $i$ to $S$ is equal to the effect of adding $j$ to $S$, then features $i$ and $j$ should be assigned equal attribution towards $f(\vec s)$. 
  \item[Monotonicity:] Given two models $f$ and $f'$, if for all feature coalitions $S, i \not \in S$, $f(S \cup i, \vec s) - f(S,\vec s)$ is greater than or equal to $f'(S \cup i, \vec s) - f'(S,\vec s)$, then $\phi_i(f, \vec s)$ is greater than or equal to $\phi_i(f',\vec s)$.
\end{description}

The \emph{Shapley value} \citep{shapleyyoung,shapley1953value} uniquely satisfies the four axioms above, and has been widely adopted in feature attribution for regression/classification tasks. It is defined as:
\begin{align*}
\phi_i(f,\vec s) &= \sum_{\vec z \subseteq \vec s} \frac{|\vec z|!(m - |\vec z| -1)!}{m!}  [ f(\vec z) - f(\vec z \setminus i)  ]
\label{eq:shapley}
\end{align*}
where $|\vec z|$ is the number of non-zero entries in $\vec z$, and $\vec z \subseteq \vec s$ represents all $\vec z$ vectors where the non-zero entries are a subset of the non-zero entries in $\vec s$.
\newpage

%-------------NDCG 
\section{The $\NDCG$ metric} \label{sec:appNDCG}
 
 Normalized Discounted Cumulative Gain (NDCG) ~\citep{jarvelin2002cumulated} is an extremely popular metric to evaluate the effectiveness of an ordering,  often used in search engines and e-commerce. It assumes that some relevance score $\rel_j$ for each document $d_j$ is known, and that in an effective ordering highly relevant documents appear at the beginning of the list. 
For an ordered list of $n$ documents, the Discounted Cumulative Gain is computed as: 
\begin{align*}
    \DCG_n &= \sum_{j=1}^n \frac{\rel_j}{\log_2{(j+1)}} 
\end{align*}
The above expression is dependent on the number of documents in the set being evaluated. This metric is thus difficult to use as-is, when comparing document lists of different lengths. As a result, we normalize it by dividing it with the maximum possible $\DCG_n$ score, also known as Ideal-$\DCG_n$($\IDCG_n$). This is obtained by sorting documents in decreasing order of relevance and computing $\DCG_n$ using the so obtained \textit{ideal} ordering.
To compute $\NDCG_n$, we simply normalize $\DCG_n$: $\NDCG_n = \frac{\DCG_n}{\IDCG_n}$. 
Note that $\NDCG_n \in [0,1]$, and that the higher the value, the more effective the ranking. 
In situations where document relevance scores $\rel_j$ are unknown, they can be estimated using implicit measures such as clicks, views or the time spent on a page or via heuristic measures such as BM25.

 $\NDCG$ is a better choice for the value function compared to other rank evaluation metrics because of numerous reasons. $\NDCG$ discounts the relevance of a document based on its position logarithmically. Unlike metrics that depend on binary relevance (precision, recall, F1), NDCG accommodates a smooth relevance function. $\NDCG$ is also more effective in measuring the quality of a large ranked list. Lastly, a large number of product search on e-commerce and recommendation systems (which we might be trying to explain) already use $\NDCG$ to evaluate the quality of their orderings.  It is the single most popular metric to evaluate the quality of a ranked list and has been used by most major search engines and e-commerce websites at some point.

\newpage
%-------------------------
\section{Ranking Feature Attribution algorithms} \label{sec:appexsranklime}
\subsection{Feature Attribution Methods for Ranking Tasks}
The following solutions have been proposed to address the ranking attribution problem described above:
\begin{description}[leftmargin=0pt]
    \item[EXS] : In EXS, \citeauthor{singh2019exs} first use LIME \citep{ribeiro2016should} to generate feature attributions for each query-document pair individually. In order to compute LIME attributions, they perturb documents and assign a binary relevance label to each perturbation, assigning a positive label if the rank of the perturbed document is higher than $k$. The authors  then add the LIME attributions of top-$k$ documents to assign a feature attribution to the entire query-document set. 
    \item [RankLIME] : \citet{chowdhury2023rank} extend LIME in a listwise manner, replacing the $L_2$ loss function in LIME with a differentiable ordering distance metric. This enables them to directly compute LIME feature attributions from ranked lists. Two of the loss functions used by them, which lead to the best results, are ApproxNDCG~\citep{qin2010general} and NeuralNDCG~\citep{Pobrotyn2021NeuralNDCG}, both differentiable alternatives to NDCG, with all relevance scores set to 1.
    \item [RankingSHAP]:~\citet{heuss2024rankingshap} propose using the Shapley value along with ordered list distance metrics to compute feature attributions for listwise ranking models. Specifically, they calculate the marginal contribution of a particular feature by determining Kendall's Tau of the ranking model output with and without that feature.
\end{description}

\newpage

\section{Analysis of KernelSHAP Optimization Algorithms and Shapley Axioms}  \label{sec:aapaxiomaticanalysis}

In this section, we provide a detailed analysis of whether the algorithms using the following KernelSHAP optimization objectives approximate the Shapley axioms. We examine each algorithm's loss function and discuss whether it satisfies the fundamental properties of the Shapley value.

The optimization objectives for the algorithms are:

\begin{align*}
L_{\text{RANKLIME}}(f_R, g, \pi_{\vec{x}}) &= \sum_{\vec{z} \in Z} \text{ApproxNDCG}(f_R(\vec{z}), g(\vec{z})) \cdot \pi_{\vec{x}}(\vec{z}) \\
L_{\text{RANKINGSHAP}}(f_R, g, \pi_{\vec{x}}) &= \sum_{\vec{z} \in Z} \left[ \tau(f_R(\vec{z}), g(\vec{z})) \right]^2 \cdot \pi_{\vec{x}}(\vec{z}) \\
L_{\text{RANKSHAP}}(f_R, g, \pi_{\vec{x}}) &= \sum_{\vec{z} \in Z} \left[ \text{NDCG}(f_R(\vec{z})) - \text{NDCG}(g(\vec{z})) \right]^2 \cdot \pi_{\vec{x}}(\vec{z})
\end{align*}

Here, $f_R$ is the original ranking model, $g$ is the surrogate model used for explanation, $\pi_{\vec{x}}$ is the KernelSHAP weighting kernel, $Z$ is the set of all possible coalitions (subsets of features), $\tau$ is Kendall's tau rank correlation coefficient, and $\text{ApproxNDCG}$ is an approximation of the Normalized Discounted Cumulative Gain (NDCG).

\subsection{Analysis of RANKLIME}

\paragraph{Efficiency.}  
The Efficiency axiom requires that the total attribution should equal the overall difference between the model output with all features and that with none. In RANKLIME, the loss minimizes an approximation error measured by \texttt{ApproxNDCG}. However, \texttt{ApproxNDCG} is not an additive measure. In other words, even if one could decompose NDCG over individual documents or features, its \emph{approximation} may not retain this additivity.

\emph{Intuitive Counter Example:}  
Suppose that for a given coalition the true NDCG decomposes as
\[
\text{NDCG} = \sum_{j} g(rel_j) \cdot h(j).
\]
If the approximation introduces nonlinearities (e.g., due to smoothing or rounding), then for two features $i$ and $j$, the marginal effects
\[
\Delta_i \text{ApproxNDCG} \quad \text{and} \quad \Delta_j \text{ApproxNDCG}
\]
need not sum to the total change. Thus, even if feature $i$ and feature $j$ are the only contributors, one might have
\[
\Delta_i \text{ApproxNDCG} + \Delta_j \text{ApproxNDCG} \neq \text{Total change},
\]
thereby violating Efficiency.

\paragraph{Symmetry.}  
Symmetry requires that if two features yield the same change in the model output in all coalitions, they should have equal attributions. However, the approximation in \texttt{ApproxNDCG} can be sensitive to ranking positions.  

\emph{Intuitive Counter Example:}  
Consider two features, $a$ and $b$, which individually yield identical changes in relevance scores. In a given coalition, due to subtle ranking reordering, the approximation may assign a slightly higher weight to the contribution of $a$ than to $b$ (or vice versa), because the ranking order determines the discount factors. Thus, even though $a$ and $b$ are symmetric in effect, the approximate metric may break the symmetry, leading to different attributions.

\paragraph{Dummy (Missingness).}  
For a feature that does not affect the model output (a dummy feature), the Shapley axiom requires that its contribution be zero. In RANKLIME, even a non-influential feature might change the ordering in the surrogate model (due to interactions with other features or the way missing features are treated in the approximation).  

\emph{Intuitive Counter Example:}  
If feature $c$ does not change any document's relevance in $f_R$, one would expect its contribution to vanish. However, when $c$ is added to a coalition, the computation of \texttt{ApproxNDCG} might slightly alter the ranking order (because the algorithm may treat missing features in a nontrivial manner), resulting in a small non-zero attribution. Hence, the Dummy axiom may be violated.

\paragraph{Monotonicity.}  
Monotonicity implies that if the marginal contribution of a feature increases in every coalition, its attribution should not decrease. Since \texttt{ApproxNDCG} is a non-linear function of ranking order and not a linear combination of contributions, increases in the relevance of a feature do not translate in a straightforward way to increases in the approximate score.  

\emph{Intuitive Explanation:}  
Imagine a situation where increasing the relevance of a document moves it from a borderline position (with a high discount factor) to the top of the list, but due to the nonlinear nature of \texttt{ApproxNDCG}, the gain saturates. This means that further increases in relevance might have a negligible effect on the ApproxNDCG, even though the true marginal contribution should be larger. Consequently, the attributions computed by minimizing the loss may not increase monotonically, thereby violating the Monotonicity axiom.

\subsection{Analysis of RANKINGSHAP}

\paragraph{Efficiency.}  
RANKINGSHAP minimizes the squared difference of Kendall's tau ($\tau$) between the rankings of $f_R$ and $g$. Kendall's tau measures the rank correlation, not the absolute difference in scores. Thus, even if two coalitions have similar $\tau$ values, the overall difference in model output might not be captured.

\emph{Intuitive Counter Example:}  
Suppose the full model output changes by a large amount when all features are present, but two different coalitions yield rankings that are highly correlated (i.e., high $\tau$). In this case, the loss may be small despite there being a significant change in the absolute values, meaning that the sum of the computed contributions (derived indirectly from $\tau$) does not match the total change. This discrepancy violates the Efficiency axiom.

\paragraph{Symmetry.}  
Kendall's tau is based on pairwise comparisons of ranking orders. Two features that are symmetric in their true impact may, however, influence the ranking order differently due to interactions with other features. This can lead to different changes in $\tau$ when each is perturbed.

\emph{Intuitive Explanation:}  
If features $a$ and $b$ are swapped in some coalitions, even though they are truly symmetric, the way $\tau$ counts pairwise disagreements might favor one ordering over the other. Thus, when minimizing the squared difference in $\tau$, the attributions computed for $a$ and $b$ might differ, thereby violating Symmetry.

\paragraph{Dummy (Null Player).}  
A feature that does not affect the ranking should contribute zero to $\tau$. However, due to the permutation-based nature of Kendall's tau, even a dummy feature may change the number of concordant or discordant pairs in subtle ways.

\emph{Intuitive Counter Example:}  
Imagine a feature $d$ that is irrelevant to the ranking. When $d$ is added to a coalition, if it is randomly inserted, the overall number of pairwise comparisons changes. Even though the feature does not affect the true ranking, this insertion can alter $\tau$ by a small amount, leading to a nonzero squared difference in the loss and thus a nonzero attribution.

\paragraph{Monotonicity.}  
Since the loss in RANKINGSHAP is non-linear (squaring the difference in $\tau$), even if a feature’s true marginal contribution increases, the resulting change in Kendall's tau may not increase proportionally.  

\emph{Intuitive Explanation:}  
Consider two coalitions where, in one, increasing the relevance of feature $e$ causes a small improvement in $\tau$, and in another, a larger improvement. If these changes do not align linearly with the changes in the model output (because $\tau$ is a rank statistic), the attribution derived from minimizing the squared differences might not reflect a monotonic increase in the true marginal effect. Hence, the Monotonicity axiom is not well approximated.

\subsection{Analysis of RANKSHAP}

RANKSHAP uses the loss function
\[
L_{\text{RANKSHAP}}(f_R, g, \pi_{\vec{x}}) = \sum_{\vec{z} \in Z} \Bigl[ \text{NDCG}\bigl(f_R(\vec{z})\bigr) - \text{NDCG}\bigl(g(\vec{z})\bigr) \Bigr]^2 \cdot \pi_{\vec{x}}(\vec{z}).
\]
Because Normalized Discounted Cumulative Gain (NDCG) can be decomposed into a sum over documents, it admits an additive formulation:
\[
\text{NDCG} = \sum_{j=1}^n g(rel_j) \cdot h(j),
\]
where $g$ is a non-decreasing gain function and $h$ is a non-increasing discount function. This additivity is precisely what is needed for the Shapley axioms.

\paragraph{Efficiency.}  
Since NDCG is additive, the difference in NDCG between two coalitions is the sum of the individual differences. Thus, the attributions computed by RANKSHAP naturally sum to the total difference in the model outputs, satisfying the Efficiency axiom.

\emph{Intuitive Proof:}  
If we denote the individual contribution of document $j$ by $g(rel_j) \cdot h(j)$, then the total effect of including a set of features is the sum over these contributions. The loss function directly minimizes the squared error between the true and surrogate NDCG scores. In doing so, it forces the surrogate model to distribute the total difference (the ``explanation budget'') among features in a way that respects additivity, which is the core idea behind the Shapley value.

\paragraph{Symmetry.}  
When two features affect the ranking in an identical manner, the corresponding changes in each document's gain function are the same. Since NDCG is additive and each document’s contribution is determined independently by its position, symmetric features will receive equal contributions. Thus, the Symmetry axiom is satisfied.

\paragraph{Dummy (Null Player).}  
If a feature is dummy (i.e., it has no impact on any document's relevance), then its inclusion does not change the per-document gain values. Consequently, its marginal contribution to NDCG is zero, and the corresponding attribution will be zero, in agreement with the Dummy axiom.

\paragraph{Monotonicity.}  
Because an increase in the relevance score of any document leads to an increase in its gain $g(rel_j)$ and thus in its weighted contribution $g(rel_j) \cdot h(j)$, the additivity of NDCG ensures that the surrogate model’s attributions increase monotonically with the true marginal contributions. This is in line with the Monotonicity axiom.

\subsection{Conclusion}

\begin{itemize}
    \item \textbf{RANKLIME:}  
    The use of \texttt{ApproxNDCG} leads to a loss function that is not fully additive over features, and its sensitivity to ranking order may produce counter-intuitive attributions. Intuitive counter examples show that Efficiency, Symmetry, Dummy, and Monotonicity can be violated.

    \item \textbf{RANKINGSHAP:}  
    By relying on Kendall's tau, which is a rank correlation measure, the loss function fails to capture absolute changes and linear contributions. This results in potential violations of Efficiency (since the sum of contributions does not match the total change), as well as Symmetry, Dummy, and Monotonicity violations due to nonlinear and permutation-sensitive behavior.

    \item \textbf{RANKSHAP:}  
    In contrast, by using NDCG—which can be decomposed additively into gain and discount components—RANKSHAP aligns well with the Shapley axioms. Intuitive reasoning shows that the additivity inherent in NDCG ensures that Efficiency, Symmetry, Dummy, and Monotonicity are satisfied.
\end{itemize}

Thus, only RANKSHAP, with its loss function based on the additive properties of NDCG, produces explanations that fully approximate the fundamental Shapley axioms, while RANKLIME and RANKINGSHAP fall short due to their reliance on non-additive, non-linear measures.
\newpage

%...Rank Comparision metrics
\section{Rank Comparison Metrics}\label{AP : rankcomparisionmetrics}
The following rank comparison metrics are popularly used in literature :
\begin{enumerate}
    \item \textbf{Kendall's Tau:} Measures the number of pairs of alternatives over which two rankings disagree. Equivalently, it is also the minimum number of swaps of adjacent alternatives required to convert one ranking into another.
    
    \item \textbf{Spearman's Footrule Distance}: Measures the total displacement of all alternatives between two rankings, i.e., the sum of the absolute differences between their positions in two rankings.
    \item  \textbf{Maximum Displacement Distance (MD)}: Measures the maximum of the displacements of all alternatives between two rankings.
    \item  \textbf{Cayley Distance}:  Measures the minimum number of swaps (not necessarily of adjacent alternatives) required to convert one ranking into another.

    %Copeland's Method in Ranked Voting
    \item \textbf{Copeland's Method}:  Copeland's method is an algorithm to assign scores to candidates in Ranked choice voting. Quoting Wikipedia, each voter is asked to rank candidates in order of preference. A candidate A is said to have majority preference over another candidate B if more voters prefer A to B than prefer B to A; if the numbers are equal then there is a preference tie. The Copeland score for a candidate is the number of other candidates over whom he or she has a majority preference plus half the number of candidates with whom he or she has a preference tie. The winner of the election under Copeland's method is the candidate with the highest Copeland score; under Condorcet's method this candidate wins only if he or she has the maximum possible score of n – 1 where n is the number of candidates.
\end{enumerate}

\newpage
%---User study Instance
\section{User Study Instance} \label{AP : userstudy}
\begin{figure}[h!] 
{\centering
  \includegraphics[width=10cm]{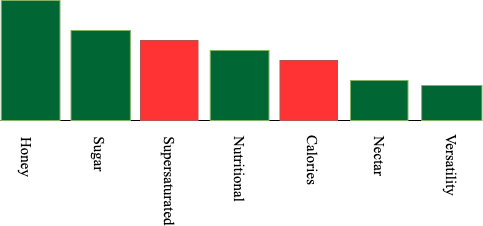} \\
 \begin{tabular}{|p{14cm}|} \hline
  \textbf{P1} So while raw honey and sugar both contain glucose and fructose, our raw liquid gold is a nutrient dense food. Just as good grey sea salt is far healthier that iodized table salt, raw honey beats the buzz off of regular white sugar in both nutritional content and effects on the body.  \\ \hline
   \textbf{P2} Agave nectar has a taste and appearance similar to honey, making it a popular substitute for strict vegans and others who avoid honey. There are two kinds of agave nectar: dark agave nectar and light. Though the two can be used interchangeably, dark agave nectar has a stronger taste. \\ \hline
   \textbf{P3} Clover honey is named so because it’s produced from the nectar of clover blossoms. It has a mild flavor and sweetness and is favored in the United States for its abundance and versatility. When honey is labeled “pure,” it means it has no additives such as sugar, corn syrup, or flavorings. \\ \hline
   \textbf{P4} Energy Source. According to the USDA, honey contains about 64 calories per tablespoon. Therefore, it is used by many people as a source of energy. On the other hand, one tablespoon of sugar will give you about 15 calories. \\ \hline
    \textbf{P5} 1 Honey crystallizes because it is a supersaturated solution. 2  This supersaturated state occurs because there is so much sugar in honey (more than 70\%) relative to the water content (often less than 20\%). \\ \hline \hline
    \textbf{Actual Query:}  Is honey as a substitute for sugar healthier?\\ 
    \hline
  \end{tabular}

}{\caption{A user study instance, depicting  5 passages and a corresponding feature attribution. Words that influenced the ranking decision positively are shown in green, whereas those that influenced negatively were shown in red. Participants were asked to re-rank documents based on the feature attribution to try and estimate the query.}\label{fig:sugar}}
\end{figure}

\newpage

\newpage
\newpage

\section{User Study details} \label{sec:appuserstudy}
\subsection{Anonymized Instructions}
For each question, you will be shown 5 short passages related to a web search query in no particular order, and a bar chart of important words as explanation depicting an unknown ranking model's rationale for ordering those answers. You have to briefly read the passages and then using your intuition (i) reorder the passages from most relevant to least relevant based on the explanation  (ii) guess the query/question that led to this particular ordering.

A word in the bar chart explanation, with positive or negative values, indicate whether they affected the order of documents that contain them, positively or negatively. Your response, which would be of the form (3,1,2,5,4) and a short text phrase/question would be recorded. This would just be used to understand how well the explanations aid in human understanding of an unknown model’s intent. 

This session has 10 such questions and should take no longer than 30 minutes. We don't recommend spending more than 3 minutes on a question. You would be completed within Cash cards worth \$10 on successful completion of a session.

\begin{figure}
    \centering
    \includegraphics[width=0.75\textwidth]{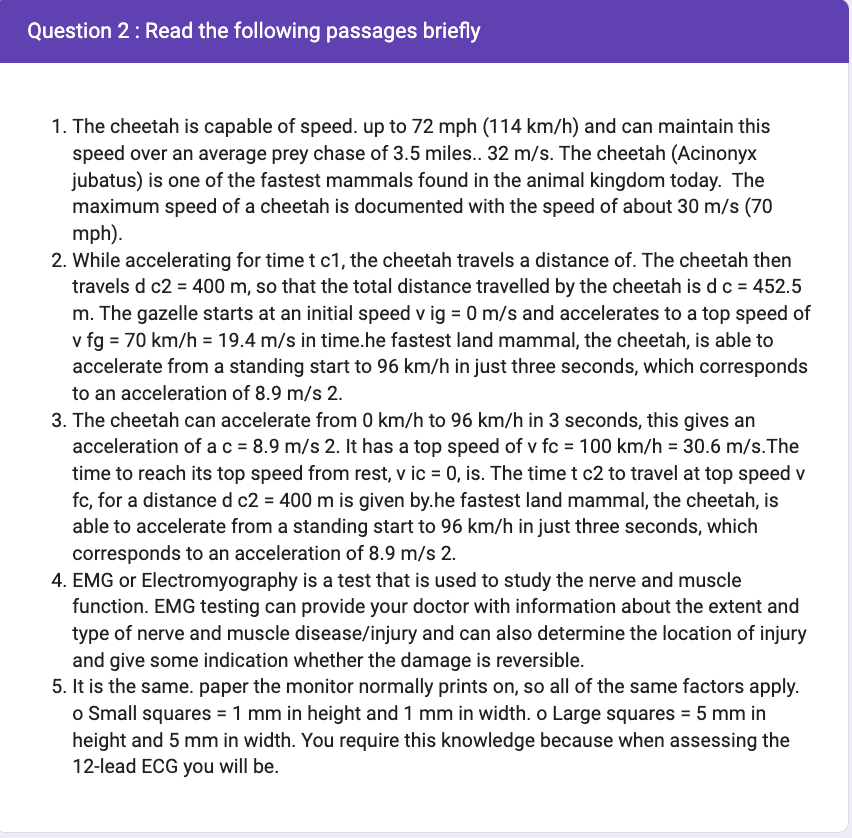}
    \caption{User Study Passage Snapshot}
    \label{fig:snapshot1}
\end{figure}

\begin{figure}
    \centering
    \includegraphics[width=0.7\textwidth]{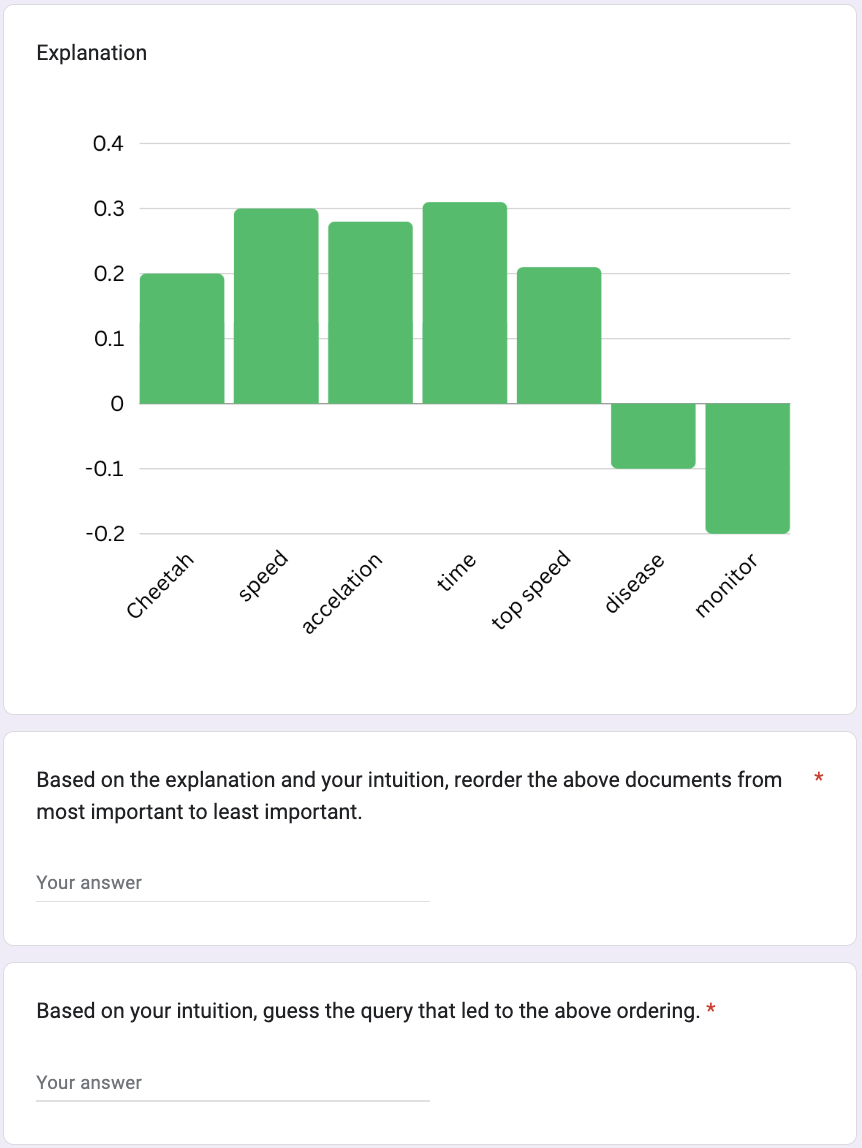}
    \caption{User Study Bar Chart and questions snapshot}
    \label{fig:snapshot2}
\end{figure}
\subsection{User Study Snapshot}
Snapshots from the User study for a particular query-document set, as seen by study participants are shown in Figures \ref{fig:snapshot1},\ref{fig:snapshot2}. The study was conducted via Google Forms.

\newpage
\section{RankSHAP performance with different Value Functions} \label{sec:val_func_comps_rankshap}

We run experiments comparing the performance of the RankSHAP framework with different value functions, namely Mean Average Precision (MAP), Cumulative Gain (CG), Discounted Cumulative Gain (DCG) and Normalized Discounted Cumulative Gain (NDCG) in Table \ref{tab:rankshap_val}. Below we discuss the same.

\subsection{Comparative Analysis of RankSHAP Variants}
RankSHAP, employing different value functions (CG, MAP, DCG, and NDCG), exhibits clear performance distinctions across datasets, rankers, and query-document configurations. Among these, RankSHAP(NDCG) consistently achieves the highest fidelity and weighted fidelity (wFidelity) scores, outperforming the other RankSHAP variants. Aggregating across all configurations, RankSHAP(NDCG) delivers an average fidelity improvement of \textbf{8.3\%} over RankSHAP(DCG), \textbf{11.5\%} over RankSHAP(MAP), and \textbf{13.8\%} over RankSHAP(CG). The performance edge of NDCG stems from its logarithmic discounting of ranks, which effectively prioritizes highly relevant documents, making it particularly suited for tasks where top-ranked items have a disproportionate impact on the evaluation.

Interestingly, the gains achieved by RankSHAP(NDCG) are most pronounced in the Top-10 setting, where it surpasses RankSHAP(CG) by an average of \textbf{17.4\%} in fidelity and \textbf{15.8\%} in wFidelity. In contrast, the improvements diminish slightly for Top-100, where NDCG maintains an edge of approximately \textbf{7.9\%} over DCG and \textbf{10.4\%} over MAP. These trends suggest that while all RankSHAP variants benefit from the Shapley-based attribution framework, those incorporating normalized or discounted value functions, like NDCG and DCG, are better suited for ranking tasks that emphasize relevance. Meanwhile, CG and MAP exhibit competitive yet slightly lower fidelity scores, likely because they lack the nuanced scaling necessary to capture differences in document importance effectively. Overall, RankSHAP(NDCG) emerges as the most robust and versatile variant, maintaining high fidelity across diverse scenarios.

\subsection{Comparison of RankSHAP Variants with Baselines}

When compared to baseline methods such as Random, EXS, RankLIME, and RankingSHAP, all RankSHAP variants demonstrate significantly higher fidelity and weighted fidelity (wFidelity) scores across rankers and datasets. Among the baselines, RankingSHAP generally performs better, but RankSHAP(NDCG) consistently outperforms it with an average fidelity gain of \textbf{16.7\%} across all settings, highlighting the advantage of incorporating axiomatic value functions like NDCG into the Shapley framework. For instance, in the Top-10 setting with the BM25 ranker on the MS MARCO dataset, RankSHAP(NDCG) achieves fidelity scores that are \textbf{21.2\%} higher than RankingSHAP, showcasing its superior ability to identify and attribute relevant features accurately.

\begin{table*}[h!]
\centering
\caption{ Comparing performance between different feature attribution methods (Random, EXS, RankLIME, RankingSHAP and RankSHAP) for token-based explanations of textual rankers (BM25, BERT, T5 and LLAMA2) based on model fidelity (Fidelity) and weighted-fidelity (wFidelity) with ranking models fine tuned on the MS MARCO (MM) and TREC ROBUST 2004 (R04) datasets. Within RankSHAP, we generate feature attributions using 4 different value functions : Cumulative Gain (CG), Mean Average Precision (MAP), Discounted Cumulative Gain (DCG) and Normalized Discounted Cumulative Gain (NDCG). For these experiments, we generate feature attributions for each ranking models with the 10, 20 and 100 most relevant documents for a particular query respectively.}

\label{tab:rankshap_val}
\resizebox{\columnwidth}{!}{
    \begin{tabular}{|c|l|c|c|c|c|c|c|} 
    \hline
    Ranker $\downarrow$ & & \multicolumn{2}{|c|}{Top-10}  & \multicolumn{2}{|c|}{Top-20}  & \multicolumn{2}{|c|}{Top-100}  \\ \hline \hline
    Metric $\rightarrow$ & &Fidelity & wFidelity & Fidelity & wFidelity & Fidelity & wFidelity \\\hline 
    \multirow{ 8}{*}{MM/BM25} & Random     & 0.08 & 0.04& -0.05 & 0.01 & -0.09 & 0.00\\ 
    & EXS     &  0.39 & 0.34 & 0.31 & 0.29 & 0.24 & 0.17 \\ 
    & RankLIME     &0.45 & 0.37 & 0.37 & 0.29 & 0.26 & 0.20\\ 
    & RankingSHAP     &0.52 & 0.41 & 0.45 & 0.33 & 0.31 & 0.24\\ 
    & \textbf{RankSHAP(MAP)}      & 0.58 & 0.41 & 0.50 & 0.36 & 0.41 &  0.29\\
    & \textbf{RankSHAP(CG)}      & 0.53 & 0.38 & 0.47 & 0.34 & 0.39 &  0.26\\
    & \textbf{RankSHAP(DCG)}      & 0.60 & 0.42 & 0.51 & 0.38 & 0.42 &  0.31\\
    & \textbf{RankSHAP(NDCG)}      & \textbf{0.63} & \textbf{0.48} & \textbf{0.54} & \textbf{0.40} & \textbf{0.47} &  \textbf{0.33}\\ \hline 
    
    \multirow{ 4}{*}{MM/BERT} & Random     & 0.12 & 0.02& -0.09 & 0.01 & -0.12 & 0.00\\ 
    & EXS     &  0.36 & 0.21 & 0.28 & 0.18 & 0.11 & 0.08  \\ 
    & RankLIME     &0.41 & 0.24 & 0.32 & 0.24 & 0.23 & 0.18\\ 
    & RankingSHAP     &0.45 & 0.29 & 0.38 & 0.27 & 0.28 & 0.24\\ 
    & \textbf{RankSHAP(MAP)}      & 0.54 & 0.42 & 0.38 & 0.36 & 0.33 &  0.26\\
    & \textbf{RankSHAP(CG)}      & 0.49 & 0.48 & 0.35 & 0.29 & 0.28 &  0.24\\
    & \textbf{RankSHAP(DCG)}      & 0.56 & 0.43 & 0.40 & 0.34 & 0.35 &  0.28\\

    & \textbf{RankSHAP(NDCG)}      & \textbf{0.59} & \textbf{0.41} & \textbf{0.45} & \textbf{0.38} & \textbf{0.39} & \textbf{0.29}\\ \hline 
    
    \multirow{ 4}{*}{MM/T5} & Random     & 0.05 & 0.03 & -0.11 & 0.01  & -0.04 & 0.00\\ 
    & EXS     & 0.36 & 0.21 & 0.30 & 0.17 & 0.11  & 0.06\\ 
    & RankLIME     & 0.35 & 0.24 & 0.35 & 0.20 & 0.27 & 0.17\\ 
    & RankingSHAP     & 0.41 & 0.30 & 0.38 & 0.32 & 0.31 & 0.22\\ 
    & \textbf{RankSHAP(MAP)}      & 0.50 & 0.34 & 0.40 & 0.33 & 0.32 &  0.24\\
    & \textbf{RankSHAP(CG)}      & 0.44 & 0.32 & 0.38 & 0.31 & 0.30 &  0.20\\
    & \textbf{RankSHAP(DCG)}      & 0.55 & 0.38 & \textbf{0.43} & 0.35 & 0.34 &  0.27\\

    & \textbf{RankSHAP(NDCG)}      & \textbf{0.56} & \textbf{0.39} & \textbf{0.43} & \textbf{0.37}  & \textbf{0.36} & \textbf{0.29}\\ \hline

     \multirow{ 4}{*}{MM/LLAMA2} & Random     & 0.04 & 0.01 & -0.2 & 0.0  & 0.02 & 0.1\\ 
    & EXS     & 0.39 & 0.20 & 0.32 & 0.15 & 0.17  & 0.05\\ 
    & RankLIME     & 0.35 & 0.24 & 0.35 & 0.20 & 0.27 & 0.17\\ 
    & RankingSHAP     & 0.44 & 0.33 & 0.37 & 0.23 & 0.31 & 0.22\\
    & \textbf{RankSHAP(MAP)}      & 0.55 & 0.38 & 0.40 & 0.33 & 0.32 &  0.25\\
    & \textbf{RankSHAP(CG)}     & 0.48 & 0.37 & 0.36 & 0.27 & 0.31 &  0.24\\
    & \textbf{RankSHAP(DCG)}      & 0.58 & 0.40 & 0.43 & 0.36 & 0.36 &  0.28\\
    & \textbf{RankSHAP(NDCG)}      & \textbf{0.60} & \textbf{0.42} & \textbf{0.45} & \textbf{0.39}  & \textbf{0.39} & \textbf{0.32}\\ \hline \hline

    \multirow{ 4}{*}{R04/BM25} & Random     & 0.08 & 0.04& -0.04 & 0.00 & -0.07 & 0.00\\ 
    & EXS     &  0.37 & 0.33 & 0.30 & 0.26 & 0.22 & 0.18 \\ 
    & RankLIME     &0.43 & 0.36 & 0.36 & 0.25 & 0.25 & 0.18\\ 
     & RankingSHAP     &0.45 & 0.35 & 0.37 & 0.29 & 0.26 & 0.24\\ 
    & \textbf{RankSHAP(MAP)}      & 0.51 & 0.37 & 0.44 & 0.33 & 0.32 &  0.26\\
    & \textbf{RankSHAP(CG) }     & 0.48 & 0.35 & 0.36 & 0.30 & 0.25 &  0.25\\
    & \textbf{RankSHAP(DCG) }     & 0.59 & 0.42 & \textbf{0.52} & 0.36 & 0.42 &  0.29\\

    & \textbf{RankSHAP(NDCG)}      & \textbf{0.61} & \textbf{0.44} & \textbf{0.52} & \textbf{0.38} & \textbf{0.45} &  \textbf{0.31}\\ \hline 
    
    \multirow{ 4}{*}{R04/BERT} & Random     & 0.12 & 0.02& -0.09 & 0.01 & -0.12 & 0.00\\ 
    & EXS     &  0.35 & 0.20 & 0.27 & 0.16 & 0.10 & 0.07  \\ 
    & RankLIME     &0.40 & 0.22 & 0.31 & 0.23 & 0.23 & 0.17\\ 
     & RankingSHAP     &0.43 & 0.26 & 0.36 & 0.28 & 0.28 & 0.23\\
    & \textbf{RankSHAP(MAP)}      & 0.48 & 0.35 & 0.40 & 0.34 & 0.37 &  0.24\\
    & \textbf{RankSHAP(CG)}      & 0.44 & 0.28 & 0.36 & 0.28 & 0.29 &  0.21\\
    & \textbf{RankSHAP(DCG)}      & 0.56 & \textbf{0.41} & 0.42 & 0.35 & 0.37 &  0.26\\

    & \textbf{RankSHAP(NDCG)}      & \textbf{0.59} & \textbf{0.41} & \textbf{0.45} & \textbf{0.38} & \textbf{0.39} & \textbf{0.29}\\ \hline 
    
    \multirow{ 4}{*}{R04/T5} & Random     & 0.05 & 0.03 & -0.11 & 0.01  & -0.04 & 0.00\\ 
    & EXS     & 0.38 & 0.21 & 0.32 & 0.17 & 0.13  & 0.06\\ 
    & RankLIME     & 0.38 & 0.24 & 0.35 & 0.20 & 0.28 & 0.17\\ 
    & RankingSHAP     & 0.41 & 0.31 & 0.38 & 0.24 & 0.32 & 0.24\\
     & \textbf{RankSHAP(MAP)}      & 0.49 & 0.34 & 0.40 & 0.27 & 0.34 &  0.24\\
    & \textbf{RankSHAP(CG)}      & 0.42 & 0.30 & 0.39 & 0.24 & 0.33 &  0.24\\
    & \textbf{RankSHAP(DCG) }     & 0.54 & 0.37 & 0.40 & 0.33 & 0.35 &  0.25\\

    & \textbf{RankSHAP(NDCG)}     & \textbf{0.56} & \textbf{0.39} & \textbf{0.43} & \textbf{0.37}  & \textbf{0.36} & \textbf{0.29}\\ \hline 
    \end{tabular}
    }
\end{table*}

\newpage
\newpage

\section{RankSHAP with Explicit vs Implicit Relevance Judgements}\label{sec: expimp_appendix}

In scenarios where absolute relevance judgments are not available, we propose generating synthetic relevance labels based on predefined rules or heuristic relevance functions. This approach involves leveraging document features such as term frequency, topic overlap, or query-document embeddings to approximate relevance proxies. Specifically, we recommend using BM25, a widely adopted baseline relevance measure in information retrieval tasks. BM25 is computationally efficient and scales well for large document collections. It effectively balances term frequency (TF) saturation and inverse document frequency (IDF), ensuring robust performance across varying term distributions.

To evaluate the impact of using BM25 as a relevance proxy, we rerun certain RankSHAP experiments by comparing results obtained with exact relevance measures to those derived using BM25. We run experiments for BERT and T5 ranking models on the MS MARCO datasets. We report fidelity and weighted fidelity (wFidelity) scores across four ranking value functions: MAP, CG, DCG, and NDCG.

\begin{table*}[h!]
\centering
\caption{ Comparing performance between different feature attribution methods (Random, EXS, RankLIME, RankingSHAP) and RankSHAP (with differnt value functions and different methods to model relevance) for token-based explanations of textual rankers (BERT and T5) based on model fidelity (Fidelity) and weighted-fidelity (wFidelity) with ranking models fine tuned on the MS MARCO (MM) dataset. Within RankSHAP, we generate feature attributions using 4 different rank value functions : Cumulative Gain (CG), Mean Average Precision (MAP), Discounted Cumulative Gain (DCG) and Normalized Discounted Cumulative Gain (NDCG) and 2 different methods to compute relevance (Explicit ranking model score vs BM25 score). For these experiments, we generate feature attributions for each ranking models with the 10, 20 and 100 most relevant documents for a particular query respectively.}

\label{tab:relevance_explicit_implicit}
\resizebox{\columnwidth}{!}{
    \begin{tabular}{|c|l|c|c|c|c|c|c|} 
    \hline
    Ranker $\downarrow$ & & \multicolumn{2}{|c|}{Top-10}  & \multicolumn{2}{|c|}{Top-20}  & \multicolumn{2}{|c|}{Top-100}  \\ \hline \hline
    Metric $\rightarrow$ & &Fidelity & wFidelity & Fidelity & wFidelity & Fidelity & wFidelity \\\hline 
    
    \multirow{ 4}{*}{MM/BERT} & Random     & 0.12 & 0.02& -0.09 & 0.01 & -0.12 & 0.00\\ 
    & EXS     &  0.36 & 0.21 & 0.28 & 0.18 & 0.11 & 0.08  \\ 
    & RankLIME     &0.41 & 0.24 & 0.32 & 0.24 & 0.23 & 0.18\\ 
    & RankingSHAP     &0.45 & 0.29 & 0.38 & 0.27 & 0.28 & 0.24\\ \hline

    & RankSHAP(BM25/MAP)      & 0.47 & 0.32 & 0.31 & 0.33 & 0.28 &  0.22\\
    & RankSHAP(BM25/CG)      & 0.44 & 0.43 & 0.31 & 0.26 & 0.24 &  0.21\\
    & RankSHAP(BM25/DCG)     & 0.50 & 0.37 & 0.36 & 0.30 & 0.31 &  0.25\\

    & RankSHAP(BM25/NDCG)      & 0.54 & 0.38 & 0.42 & 0.37 & 0.36 & 0.25\\ \hline

    & RankSHAP(MAP)      & 0.54 & 0.42 & 0.38 & 0.36 & 0.33 &  0.26\\
    & RankSHAP(CG)      & 0.49 & 0.48 & 0.35 & 0.29 & 0.28 &  0.24\\
    & RankSHAP(DCG)     & 0.56 & 0.43 & 0.40 & 0.34 & 0.35 &  0.28\\

    & RankSHAP(NDCG)      & \textbf{0.59} & \textbf{0.41} & \textbf{0.45} & \textbf{0.38} & \textbf{0.39} & \textbf{0.29}\\ \hline \hline \hline

  %----------------------------------

    \multirow{ 4}{*}{MM/T5} & Random     & 0.05 & 0.03 & -0.11 & 0.01  & -0.04 & 0.00\\ 
    & EXS     & 0.36 & 0.21 & 0.30 & 0.17 & 0.11  & 0.06\\ 
    & RankLIME     & 0.35 & 0.24 & 0.35 & 0.20 & 0.27 & 0.17\\ 
    & RankingSHAP     & 0.41 & 0.30 & 0.38 & 0.32 & 0.31 & 0.22\\ \hline

    & RankSHAP(BM25/MAP)    & 0.45 & 0.30 & 0.37 & 0.29 & 0.30 &  0.21\\
    & RankSHAP(BM25/CG)     & 0.41 & 0.28 & 0.36 & 0.27 & 0.27 &  0.18\\
    & RankSHAP(BM25/DCG)     & 0.51 & 0.35 & 0.42 & 0.31 & 0.31 &  0.24\\
    & RankSHAP(BM25/NDCG)     & 0.52 & 0.36 & 0.40 & 0.34  & 0.32 & 0.27\\ \hline

    & RankSHAP(MAP)    & 0.50 & 0.34 & 0.40 & 0.33 & 0.32 &  0.24\\
    & RankSHAP(CG)     & 0.44 & 0.32 & 0.38 & 0.31 & 0.30 &  0.20\\
    & RankSHAP(DCG)     & 0.55 & 0.38 & \textbf{0.43} & 0.35 & 0.34 &  0.27\\

    & RankSHAP(NDCG)     & \textbf{0.56} & \textbf{0.39} & \textbf{0.43} & \textbf{0.37}  & \textbf{0.36} & \textbf{0.29}\\ \hline

    \end{tabular}
    }
\end{table*}

The results of our study can be found in Table \ref{tab:relevance_explicit_implicit}.  For both models, explicit relevance scores consistently yield higher fidelity and weighted fidelity (wFidelity) across all experimental settings (top-10, top-20, top-100). The performance drop when switching from explicit to BM25-based scores is modest, with fidelity reductions ranging between $6\%$ and $14\%$, depending on the value function and the number of top documents. Notably, the drop is more pronounced for smaller document sets (e.g., top-10), where precise relevance judgments likely have a greater impact on the accuracy of feature attributions.

With the top-10 documents, the fidelity of RankSHAP(NDCG) drops from 0.59 (explicit) to 0.54 (BM25) for BERT, and from 0.56 (explicit) to 0.52 (BM25) for T5. These results suggest that BM25 serves as a reasonably effective proxy for relevance, particularly when explicit scores are unavailable, albeit with  slight performance degradation. Overall, the findings demonstrate that RankSHAP remains robust with implicit relevance scores, though explicit scores offer a tangible advantage, particularly in scenarios requiring finer-grained fidelity measures.

It is important to note that a limitation of BM25 is its reliance on exact keyword matches, which can overlook semantic relationships such as synonyms or paraphrases.

\end{document}